\newcommand{\pdefigs}[1]{}
\newcommand{\dsl}{\displaystyle}
\newcommand{\lapl}{\bigtriangleup^d}
\newcommand{\lact}[1]{\xrightarrow{\tiny #1}}
\newcommand{\dt}{\Delta t}
\newcommand{\ds}{\Delta s}
\newtheorem{proposition}{Proposition}
\newtheorem{theorem}{Theorem}
\newtheorem{corollary}{Corollary}
\newtheorem{example}{Example}
\begin{document}

\begin{frontmatter}

\title{Spatial Fluid Limits for Stochastic Mobile Networks}

\author{Max Tschaikowski}
\ead{max.tschaikowski@imtlucca.it}

\author{Mirco Tribastone}
\ead{mirco.tribastone@imtlucca.it}
\cortext[cor1]{Corresponding author}

\address{IMT --- Institute for Advanced Studies Lucca, Italy}

\begin{abstract}
We consider Markov models of large-scale networks where nodes are characterized by their local behavior and by a mobility model over a two-dimensional lattice. By assuming random walk, we prove convergence to a system of partial differential equations (PDEs) whose size depends neither on the lattice size nor on the population of nodes. This provides a macroscopic view of the model which approximates discrete stochastic movements with continuous deterministic diffusions. We illustrate the practical applicability of this result by modeling a network of mobile nodes with on/off behavior performing file transfers with connectivity to 802.11 access points. By means of an empirical validation against discrete-event simulation we show high quality of the PDE approximation even for low populations and coarse lattices. In addition, we confirm the computational advantage in using the PDE limit over a traditional ordinary differential equation limit where the lattice is modelled discretely, yielding speed-ups of up to two orders of magnitude.
\end{abstract}

\begin{keyword}
Stochastic networks \sep mobility models \sep fluid limits \sep reaction-diffusion.
\end{keyword}

\end{frontmatter}


\section{Introduction}\label{sec:introduction}

This paper is concerned with the modeling and analysis of large-scale computer and communication networks with mobile nodes. We consider an expressive model of a node which is assumed to evolve over a set of \emph{local states}. These may represent, for instance, distinct phases of a node's behavior that distinguish between periods of activity and inactivity in a wireless network or, with a richer granularity, they may be related to different states in a communication protocol. We are interested in understanding how the interplay between the node's local behavior and its mobility pattern have an impact on the overall system's performance.

In this paper, we consider \emph{Markov population processes} as the starting point of our analysis. These represent a general class of continuous-time Markov chains (CTMCs), characterized by a state descriptor which is given by a vector where each element gives the population of nodes in a specific local state. In general, such stochastic models do not scale well with increasing system sizes because they are based on an explicit discrete-state representation. This makes the analysis very difficult computationally and poses a detriment to model-assisted parameter-space exploration and capacity planning.  However, under mild assumptions, a Markov population process may yield a \emph{fluid approximation} as a system of ordinary differential equations (ODEs). This can be shown to be the deterministic limit behavior that holds when the number of nodes goes to infinity; see, e.g.,~\cite{kurtz-1970}. Fluid limits have been developed for a wide range of models of distributed and networked systems, including, for instance, load balancing~\cite{Ganesh09,gast2010workstealing}, optical switches~\cite{DBLP:conf/sigmetrics/HoudtB12}, virtualized environments~\cite{Anselmi20111207}, and peer-to-peer networks~\cite{Zhou:2011:SOU:1993744.1993779}. In~\cite{benaim-leboudec} Bena\"im and Le Boudec offer a general framework for a Markov model of \emph{interacting agents} evolving through a set of \emph{local states}. In all cases, while the Markov process has a state space with cardinality that is exponential in the number of agents (in the worst case), the fluid limit only depends linearly on the number of local states. This procedure is very attractive in practice because it enables computationally efficient solutions, when the local state space is small, that are typically very accurate for large populations.


In the aforementioned models, however, spatial effects are not explicitly present because the system is inherently static, or because it is a convenient simplification to abstract away from them for the purposes of model tractability~\cite{Zhang20072867}.  Unfortunately, explicitly taking into account mobility and space in a fluid model generally leads to  a rapid growth of the local state space. In this paper, for instance, we deal with population processes defined by a CTMC with a random walk (RW) model over a two-dimensional topology.
We consider a typical partitioning of the spatial domain into \emph{regions} (or \emph{patches}, as usually called in theoretical biology~\cite{PatchReview}). Within each region an assumption of homogeneity is made, and spatial effects are incorporated as transitions across regions~\cite{DBLP:conf/sigmetrics/ChaintreauBR09}. If the dynamics of each region admits a fluid limit, then the overall system behavior is given by $L(K+1)^2$ ODEs, where, according to the notation of this paper, $L$ is the local state space size and $(K+1)^2$ is the total number of regions. Such dependence is simply due to the fact that the model must keep track of the total population of nodes in each of the $L$  local states, \emph{at each region}.

We consider this problem in the context of a generic framework for \emph{mobile reaction networks}, a concise model to describe Markov population processes. This is inspired by stochastic reaction networks, which are frequently used for the analysis of systems of (bio-)chemical reactions (e.g.,~\cite{Gillespie77}).
We assume that nodes may interact with each other within the same region, and undergo unbiased RW with a parameter, the \emph{migration rate}, which can be dependent on their local state. We also consider  \emph{absorbing} and \emph{reflective} boundary conditions for the spatial domain. In the presence of the former, each node that hits the boundary exits the domain without the possibility of ever re-entering it. Instead, if the latter is considered, the boundaries build a barrier of the domain that cannot be crossed by nodes. Those assumptions do not exclude the possibility of having exogenous arrivals into the system, e.g., by means of a Poisson process in each region.

The purpose of this paper is to develop a technique to effectively analyze models when $K$  and the total population of nodes are large. It is natural to consider an approach with two dimensions of scaling: the first one is the celebrated \emph{density-dependent} form that allows us to obtain a continuous deterministic limit for the concentrations (or densities) of the node populations~\cite{kurtz-1970}, while keeping the regions discrete. This is achieved by constructing a family of population processes indexed by a parameter, hereafter denoted by $N$, such that the initial state of each element of the family is a population of nodes that grows linearly with $N$. Then, the result of Kurtz~\cite{kurtz-1970} ensures that the sequence converges, as $N \to \infty$, to the solution of an ODE system of size $L(K + 1)^2$. The second scaling occurs on the spatial dimension. We define a suitable sequence of migration rates such that increasing $K$ means considering regions which are closer and closer to each other on a regular mesh in the unit square in~$\mathbb{R}^2$ (i.e., the regions are at a distance $1/K$ from each other). After setting the scene, we show that the sequence of ODE systems converges, as $K \to \infty$, to a system of $L$ partial differential equations (PDEs) of reaction-diffusion type. In these PDEs, the diffusive terms model the continuous migration across regions, whereas the reactive terms describe the local interactions between nodes. To the best of our knowledge, such a convergence result has not been proven before. We also argue that our limit result allows for a more efficient numerical solution. This may seem surprising because analytical solutions of PDE systems are scarce and numerical solutions of PDE systems rely on a discretization of space. However while in the case of the limit  ODE system of size $L(K + 1)^2$ the discretization is dictated by the stochastic model, in our limit PDE system the coarseness of the discrete mesh depends on the PDE solver. In other words, the PDE solver may in effect give rise to a coarsening of the original spatial domain.

A typical situation of practical interest to which our framework can be applied is that of large-scale mobile networks such as personal communication services (e.g.,~\cite{857925}): there are many base stations (e.g., in a wide-area cellular network) and the area served by a base station can be modeled as a region, which can contain potentially many mobile nodes that may migrate across regions. Here, the modeler may wish to predict how nodes distribute across the network over a given time horizon~\cite{Akyildiz:1996:MLU:234766.234778}. In this paper, we demonstrate the applicability of our method by studying a network of mobile nodes with connectivity provided by an 802.11 access point located in each region. First, we successfully validate our CTMC model against discrete-event simulation using the JiST/SWANS discrete-event simulation framework~\cite{Barr:2005:JEA:1060168.1060170}. Then, we show that the PDE solution provides an excellent estimate of the network's performance even for relatively small population sizes and coarse lattices. We confirm that the discretization of the space domain obtained for the PDE solution induces coarsening; numerically, we observe up to two orders of magnitude speed-ups with respect to the solution of the lattice-dependent limit ODE system.  This promotes our approach as a versatile tool for the accurate evaluation of large-scale mobile systems.

\paragraph*{Paper outline} The remainder of the paper is organized as follows. Section~\ref{sec:related} overviews related work. In order to fix notation and provide intuition on the scaling limits considered in the paper, Section~\ref{sec:srn} introduces \emph{stationary reaction networks}, which describe \emph{static} networks with no mobility, whilst allowing nodes to be described by a local state space. Here we discuss how such models admit a classic fluid limit as a system of ODEs.
Section~\ref{sec:smrn} presents mobile reaction networks, which are defined as a conservative extension of stationary networks with an explicit mobility model. We introduce a straightforward spatial ODE limit that depends upon the lattice granularity. In Section~\ref{sec:pde} we discuss the main contribution of this paper, namely the convergence of the spatial ODE limit to a system of reaction-diffusion PDEs by assuming that nodes undergo unbiased random walk during their evolution. Section~\ref{sec:validation} discusses the numerical tests on our validation model. Finally, Section~\ref{sec:conclusion} concludes the paper. %

\section{Related Work}\label{sec:related}


\paragraph*{Reaction-diffusion PDEs} PDEs of reaction-diffusion type are very well understood in many disciplines, such as biology~\cite{murray:i}, ecology~\cite{Levin01}, and chemistry~\cite{kampen01}. It is beyond the scope of this paper to provide a general overview of the literature. Instead, we focus on related work that, similarly to ours, considers PDEs as the macroscopic deterministic behavior of a stochastic process.

In physical chemistry, one such approach is to consider the so-called \emph{reaction-diffusion master equation}, which corresponds to the forward equations of a CTMC that models a network of biochemical reactions occurring at discrete sites, and molecular transitions across sites~\cite{doi:10.1137/070705039}. Although the stochastic model corresponds to ours, a result of convergence is not provided as done in this paper, for two reasons. First, the underlying ODE system is obtained by means of a moment-closure technique, meaning that~\cite{doi:10.1137/070705039} \emph{makes the approximation} that for a given stochastic process $X(t)$ and a function $f$ it holds that $\mathbb{E}[f(X(t))] \approx f(\mathbb{E}[X(t)])$, although this does not hold in general for nonlinear $f$. Second,~\cite{doi:10.1137/070705039} \emph{assumes} implicitly that the underlying ODE systems of size $\mathcal{O}(K^2)$ converge, as $K \to \infty$, to a PDE system. Instead, we prove both facts formally by applying the following strategy. First, we use Kurtz's result to show that the underlying rescaled stochastic process converges in probability to an ODE system of size $L(K + 1)^2$. In this respect, we justify approximation $\mathbb{E}[f(X(t))] \approx f(\mathbb{E}[X(t)])$ as being asymptotically true in the limit of infinite populations. Afterwards, we use numerical analysis to show that the ODE systems converge, as $K \to \infty$, to a PDE system of size $L$. By combining both statements, we are able to prove that the underlying rescaled stochastic process \emph{converges in probability} to a PDE system of size $L$.

In the recent work~\cite{valuetools14mtmt} the authors study a spatial process algebra for the modeling of mobile systems by means of reaction-diffusion PDEs. However, similarly to~\cite{doi:10.1137/070705039}, no formal proof of convergence is provided, meaning that the ODE system is \emph{assumed to converge} to the reaction-diffusion PDE system. Moreover,~\cite{valuetools14mtmt} is not as expressive as our framework, because it considers a specific communication pattern and supports only reflective boundaries.

The closest approach to ours is found in~\cite{Arnold1980}, which was subsequently worked upon in a series of papers~\cite{Kotelenez1986,Blount1991,Blount1993,Blount1996}.
In the case of absorbing boundaries, their stochastic models are different in that the local reaction rates model only birth and death, i.e. interactions between multiple types of agents are not possible. In the case of reflective boundary conditions, instead, we generalize~\cite{Kotelenez1986} because we can treat more than one type of node.



\paragraph*{Mobility models} There is a substantial amount of work on mobility models, both at the analytical level and  experimentally through traces. In this paragraph we overview the literature that is most closely related to our approach based on the RW model; for an extensive discussion, we refer to the survey~\cite{WCM:WCM72}.

Owing to its analytical tractability, the RW model has been extensively studied in networking research. A discrete-time Markovian model was developed in~\cite{DBLP:conf/infocom/Bar-NoyKS94} for the comparative evaluation of update strategies in cellular networks. The paper considers a mobility model where a node is characterized by states that determine the direction of movement. In this respect our approach is analogous, since agent types may have different movement rates. However, the results of their analysis are presented for a one-dimensional topology (over a ring) and cannot be lifted to more general local interactions; on the other hand, their mobility model is anisotropic. Unbiased RWs are instead proposed in~\cite{Akyildiz:1996:MLU:234766.234778} and~\cite{857925} to study movements across cellular networks, and to study routing protocols~\cite{DBLP:conf/infocom/IoannidisM06,1642727} and performance characteristics in ad-hoc networks~\cite{1354518}. In all cases,  mobility is not coupled with a model of interaction between nodes. RW is also used in~\cite{1498494} as the basic mobility model by which the authors arrive at a reaction-diffusion type equation for information propagation in ad-hoc networks; their analysis is carried out at the deterministic/macroscopic level without considering a limiting regime of a counterpart microscopic/stochastic process. Instead, the PDEs used in~\cite{Garetto:2007:ARM:1313051.1313242} are interpreted as the deterministic limit of the empirical measure of node concentrations, by appealing to the strong law of large numbers.

\paragraph*{Summary} In conclusion, deterministic characterizations of large-scale models of interacting agents are well established in the spatially homogeneous case; on the other hand, diffusion-reaction type equations are well understood when only the spatial/mobility model is of interest, and the results available for cases that also include certain kinds of local interactions do not generalize. In this paper, we bridge the gap between these two perspectives by providing a generic framework for modeling a richer variety of local interactions coupled with a (possibly state-dependent) RW model. 


\section{Stationary Reaction Networks}\label{sec:srn}

In this section we provide a definition of a classic stochastic reaction network which does not consider an explicit mobility model. For this reason, this is called a \emph{stationary reaction network}. For illustrative purposes, we also show how this notation allows us to recover two different fluid models of networked computing systems already studied in the literature. This sets the stage for Section~\ref{sec:smrn}, which, instead, discusses the novel contribution of this paper, consisting of extending a stationary reaction network with space and mobility by means of a RW.

We introduce a CTMC population process where nodes are characterized by $L$ local states, denoted by $A_1, \ldots, A_L$. Our CTMC has state descriptor
\[
\vec{A} := (A_1, \ldots, A_L),
\]
which gives the node populations in each local state. Nodes engage in $J$ \emph{interactions}, labeled $1, \ldots, J$. These are given in the \emph{reaction notation}
\begin{equation}\label{eq:reaction}
j: \quad \sum_{1 \leq l \leq L} c_{jl} A_l \lact{F_j} \sum_{1 \leq l \leq L} d_{jl} A_l, \quad 1 \leq j \leq J.
\end{equation}
The arrow label  $F_j$ denotes the function $F_j : \mathbb{R}^{L + P} \rightarrow \mathbb{R}$ that describes the rate at which the interaction occurs. It is dependent on the population of nodes in each of the $L$ local states, and on $P \geq 0$ rate parameters. Additionally, $F_j$ may explicitly depend on the \emph{scaling parameter} $N \in \mathbb{N}$; this will be used later in this section to construct a sequence of CTMCs, indexed by $N$, that will enjoy convergence to the ODE fluid limit.  The parameters $c_{jl}$ and $d_{jl}$, with $1\leq j \leq J$, $1 \leq l \leq L$ are instead nonnegative integers that describe the number of nodes in each local state that are simultaneously affected by the $j$-th transition. Without loss of generality, we assume that $c_{jl} > 0$ and $0 \leq A_l  < c_{jl}$ imply $F_j(A_1, \ldots, A_L, \beta_1, \ldots, \beta_P\big) = 0$ for all $1 \leq j \leq J$ and $1 \leq l \leq L$. This condition ensures that population do not become negative after a transition occurs, thus maintaining the model meaningful from a physical point of view.
Let a transition rate of the CTMC between states $\vec{A}$ and $\vec{A'}$ be denoted by $q(\vec{A}; \vec{A'})$.  In our stationary reaction network, the rates are given by
\[
q\Big(\vec{A};  (A_1 + d_{j1} - c_{j1}, \ldots, A_L + d_{jL} - c_{jL} ) \Big) = \\
\sum_{j' \in S_j} F_{j'}(A_1, \ldots, A_L, \beta_1, \ldots, \beta_P),
\]
where $S_j$ is the set of all actions which lead to the same state change as the $j$-th action, i.e., $S_j = \big\{ 1 \leq j' \leq J \mid \forall 1 \leq l \leq L ( d_{j'l} - c_{j'l} = d_{jl} - c_{jl} ) \big\}$.

Let us now show two examples of how models that are already available in the literature can be captured with this definition.
\begin{example}
We consider one of the models of epidemic routing studied in~\cite{Zhang20072867}. Its basic version has a single local state, $A_1$, which denotes the number of susceptible nodes within an overall population of $N$ nodes (thus, $N - A_1$ is the population of infected nodes).
The single reaction is
\[
1: \qquad 1 A_1 \xrightarrow{F_1} 0 A_1, \qquad F_1 = \frac{1}{N} \beta A_1 (N - A_1),
\]
where $\beta > 0$ is the contact rate for the infection. For instance, assuming $N > 5$, the state $\vec{A} = (5)$ has a transition to $\vec{A'} = (5 + 0 - 1) = (4)$ with rate $5 \beta (1 - 5/N)$.
\end{example}

\begin{example}\label{ex:bit-torrent}
The deterministic model of peer-to-peer file sharing presented in~\cite{Srikant04} can be shown to be the fluid limit of the following CTMC population process. Let $A_1$ denote the number of \emph{downloaders}, i.e., peers that have a partial copy of a shared file, and let $A_2$ denote the number of \emph{seeds}, i.e., peers that have already completed the download. The reactions are:
\begin{align*}
1: \ 0 A_1 + 0 A_2 & \xrightarrow{F_1} 1 A_1 + 0 A_2,  &  F_1 & = \lambda N, \\
2: \ 1 A_1 + 0 A_2 & \xrightarrow{F_2} 0 A_1 + 0 A_2 , &  F_2 & = \sigma A_1, \\
3: \ 0 A_1 + 1 A_2 & \xrightarrow{F_3} 0 A_1 + 0 A_2 , &  F_3 & = \gamma A_2, \\
4: \ 1 A_1 + 1 A_2 & \xrightarrow{F_4} 0 A_1 + 2 A_2, & F_4 & = \min\{c A_1, \mu(\eta A_1 + A_2)\}.
\end{align*}
Interaction 1 gives the rate of arrivals of new downloaders into the network, parametrized by $\lambda > 0$ and dependent on the network size via $N$. Interactions 2 and 3 give the rate of exit of downloaders and seeds, respectively, with $\sigma, \gamma > 0$ being their individual exit rates. Finally interaction 4 gives the rate at which a downloader completes the file, thus becoming a seed. This is based on a bandwidth-sharing argument between the total requested capacity by downloaders, $cA_1$, with  $c > 0$, and the total upload capacity of the system. The latter is given by the total upload seed capacity (where $\mu > 0$ is the individual upload capacity) and that of the downloaders who are sharing their  partial copy; $0 \leq \eta \leq 1$ is the probability that a requested portion of the file is available at a peer.
\end{example}

The following model represents the stationary version of the case study used in Section~\ref{sec:validation} for the numerical validation.
\begin{example}[Nodes with on/off behavior]\label{ex:case-study}
Let us consider a node with two local states, $A_1$, and $A_2$, denoting an \emph{off} state where the node does not use network capacity, and an \emph{on} state where it downloads a file, respectively. Let us assume that all nodes share the same capacity $c N$ and that $1/\lambda$ is the average time spent by the node in the off state. The model is given by the following reactions.
\begin{align*}
1: \quad 1 A_1 + 0 A_2 & \xrightarrow{F_1} 0 A_1 + 1 A_2, & F_1 & = \lambda A_1, \\
2: \quad 0 A_1 + 1 A_2 & \xrightarrow{F_2} 1 A_1 + 0 A_2, & F_2 & = c \min\{A_2, N\}.
\end{align*}
The interpretation of the interaction function $F_2$ is similar to $F_4$ in Example~\ref{ex:bit-torrent} and is based on the same argument of bandwidth sharing. Specifically, $1/c$ is the average time to download a file. The network offers a capacity equal to $cN$, whereas the downloading users request an overall capacity $c A_2$. Thus, the actual number of files downloaded per unit time is $\min\{ cA_2, cN\} = F_2$. Notice that this does introduce resource contention, hence delay experienced by the downloading nodes. For instance, for $N=1$ and $A_2 > 1$ the capacity is $c$, which will be shared amongst all nodes. 
 \end{example}

In order to obtain a fluid limit for a stationary reaction network, we define a CTMC sequence,  denoted by $(\vec{A}_N(t))_{t \geq 0}$ and indexed by the scaling parameter $N$ such that the initial state of the $N$-th CTMC is given by $A_l(0) = \lfloor N \alpha_l^0 \rfloor$, where $\alpha_l^0 \in \mathbb{R}_{\geq 0}$. Therefore the initial population of nodes increases linearly with $N$. If one considers the rescaled CTMC sequence $(\vec{A}_N(t)/N)_{t \geq 0}$ in the examples above, the larger $N$ the faster the rates and, at the same time, the smaller the jumps; these are of order $O(1/N)$ because the non-normalized CTMC has unitary decreases in the populations. Informally, this behavior suggests a trend that is continuous in the limit.

More formally, convergence of $(\vec{A}_N(t)/N)_{t \geq 0}$ to such a fluid limit needs two assumptions. First, we require that for each $1 \leq j \leq J$ there exist continuous functions $f_j : \mathbb{R}^{L + P} \rightarrow \mathbb{R}$ and $g_j : \mathbb{R}^{L + P} \rightarrow \mathbb{R}_{\geq0}$ such that
\[
\textbf{\bfseries (A1):} \quad \frac{1}{N} F_j\big(N a_1, \ldots, N a_L, b_1, \ldots, b_P\big)
= f_j\big(a_1, \ldots, a_L, b_1, \ldots, b_P\big) + \mathcal{O}\left(\frac{g_j(a_1,\ldots,a_L,b_1,\ldots,b_P)}{N}\right) .
\]

Second, we require that for all $\vec{z}_0 \in \mathbb{R}^{L + P}$ there exist an open neighborhood $\mathfrak{O}$ of $\vec{z}_0$ and a $C \in \mathbb{R}_{\geq0}$ such that
\begin{equation*}
\textbf{\bfseries (A2):} \quad \big|f_j\big(\vec{z}_2) - f_j\big(\vec{z}_1)\big| \leq C \| \vec{z}_2 - \vec{z}_1 \| , \qquad \forall \vec{z}_1, \vec{z}_2 \in \mathfrak{O} .
\end{equation*}

In essence, \textbf{(A1)} asserts that the underlying CTMC is in the density-dependent form, while \textbf{(A2)} requires that each $f_j$ be \emph{locally} Lipschitz continuous. Both conditions will allow us to formally relate the CTMC sequence to a system of $L$ ODEs. For instance,
with regard to Example~\ref{ex:case-study} we have that
\[
f_1(a_1, a_2) = \lambda a_1 \qquad \text{and} \qquad f_2(a_1, a_2) = c \min \{ a_2, 1\}
\]
satisfy \textbf{\bfseries (A1)} for $F_1$ and $F_2$, respectively, and enjoy locally Lipschitz continuity. These functions constitute the vector field of the ODE limit. More precisely, the following holds.

\begin{theorem}[ODE Fluid Limit of Stationary Networks]\label{thm_kurtz_stat}
Let us define
\begin{equation}\label{eq:ode.term.stationary}
\mathcal{L}_l(\vec{a}) = \sum_{1 \leq j \leq J} (d_{jl} - c_{jl}) f_j\big(a_1, \ldots, a_L,\beta_1, \ldots, \beta_P\big) .
\end{equation}
Then, the ODE system
\begin{equation*}
\frac{d}{dt} a_l(t) = \mathcal{L}_l(\vec{a}(t))
\end{equation*}
subjected to the initial condition $\vec{\alpha}^0 = (\alpha_1^0, \ldots, \alpha_L^0)$ has a unique solution $\vec{a}$ in $\mathbb{R}^L$. Moreover, under the assumption that for an arbitrary but fixed $T > 0$ the time domain of $\vec{a}$ contains $[0;T]$, it holds that
\begin{equation*}
\lim_{N \rightarrow \infty} \mathbb{P} \left\{ \sup_{0 \leq t \leq T} \left\| \frac{1}{N}A_N(t) - \vec{a}(t)\right\|_\infty > \varepsilon \right\} = 0, \quad \forall \varepsilon > 0,
\end{equation*}
where the supremum norm $\| \cdot \|_\infty$ on $\mathbb{R}^L$ is given by $\| \vec{a} \|_\infty := \max_{1 \leq l \leq L} | a_l |$.
\end{theorem}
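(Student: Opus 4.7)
The plan is to follow the now-classical strategy of Kurtz for density-dependent Markov chains, adapted here to the fact that each $f_j$ is only \emph{locally} Lipschitz rather than globally so. First I would settle the deterministic statement: local Lipschitz continuity of each $f_j$ guarantees that the vector field $\mathcal{L} = (\mathcal{L}_1, \ldots, \mathcal{L}_L)$ is locally Lipschitz, so the Picard--Lindelöf theorem yields a unique maximal solution $\vec{a}$ starting from $\vec{\alpha}^0$, and by hypothesis its domain contains $[0,T]$. In particular, the image $\vec{a}([0,T])$ is compact, so one can fix some open, bounded, convex neighborhood $U \subset \mathbb{R}^L$ of this image on which every $f_j$ is Lipschitz with a common constant $C_U$.

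Next I would use the standard Poisson-process representation of the CTMC. For each reaction $j$ let $Y_j$ be an independent unit-rate Poisson process. By \textbf{(A1)},
\begin{equation*}
\frac{1}{N}A_{N,l}(t) = \frac{1}{N}A_{N,l}(0) + \sum_{j=1}^J (d_{jl} - c_{jl}) \frac{1}{N} Y_j\!\left(N \int_0^t f_j\!\left(\tfrac{1}{N}\vec{A}_N(s)\right) ds\right),
\end{equation*}
where I suppress the parameters $\beta_1,\ldots,\beta_P$. Writing each Poisson term as its compensator plus a centered martingale $M_j^N(t) := Y_j(Nu) - Nu$ evaluated at the corresponding random time, and subtracting the integral form of the ODE, I obtain the clean decomposition
\begin{equation*}
\frac{1}{N}\vec{A}_N(t) - \vec{a}(t) = \varepsilon_N^0 + \int_0^t \left[\mathcal{L}\!\left(\tfrac{1}{N}\vec{A}_N(s)\right) - \mathcal{L}(\vec{a}(s))\right] ds + R_N(t),
\end{equation*}
where $\varepsilon_N^0 \to 0$ deterministically (since $A_{N,l}(0) = \lfloor N\alpha_l^0 \rfloor$) and $R_N$ collects the rescaled martingale terms.

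The core step is a localization/Grönwall argument. Define the stopping time $\tau_N := \inf\{ t \geq 0 : N^{-1}\vec{A}_N(t) \notin U \}$. On $\{ t \leq \tau_N \wedge T \}$ the trajectory stays in $U$, so the $\mathcal{L}$-difference in the display is bounded by $C' \| N^{-1}\vec{A}_N(s) - \vec{a}(s)\|_\infty$ for some constant $C'$ depending on $C_U$ and the stoichiometric coefficients. Applying Grönwall's inequality on $[0, \tau_N \wedge T]$ gives
\begin{equation*}
\sup_{0 \leq t \leq \tau_N \wedge T} \left\| \tfrac{1}{N}\vec{A}_N(t) - \vec{a}(t) \right\|_\infty \leq \left( \|\varepsilon_N^0\|_\infty + \sup_{0 \leq t \leq T} \|R_N(t)\|_\infty \right) e^{C' T}.
\end{equation*}
The functional strong law of large numbers for the Poisson process, $\sup_{u \leq K} |Y_j(Nu)/N - u| \to 0$ almost surely for every $K$, together with the deterministic bound $f_j \leq \sup_U |f_j| < \infty$ on the clock of $Y_j$ for $t \leq \tau_N$, yields $\sup_{t \leq \tau_N \wedge T} \|R_N(t)\|_\infty \to 0$ in probability. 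This simultaneously forces $\mathbb{P}(\tau_N < T) \to 0$, because $\vec{a}$ lies strictly inside $U$ on $[0,T]$ and the error is small with high probability, so with probability tending to one the trajectory cannot exit $U$ before $T$. Combining these two facts gives the required conclusion.

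The main obstacle is the last point: because $f_j$ is only locally Lipschitz, one cannot naively apply Grönwall on all of $[0,T]$. The localization via $\tau_N$ and the bootstrap that $\tau_N \wedge T = T$ with high probability is the delicate part, and is what makes the order of arguments (first Grönwall up to $\tau_N$, then rule out $\tau_N < T$) essential.
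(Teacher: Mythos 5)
Your argument is correct, but it is not the route the paper takes. The paper does not prove the stationary theorem directly at all: it treats it as the special case $K=1$ (in effect) of the spatial ODE limit, Theorem~\ref{thm_kurtz}, whose appendix proof proceeds by computing the drift $D_N(\underline{\vec{a}}) = \lambda_N(\underline{\vec{a}})\int(\underline{\vec{b}}-\underline{\vec{a}})\,\mu_N(\underline{\vec{a}},d\underline{\vec{b}})$ of the rescaled chain, identifying it via \textbf{(A1)} with the ODE vector field, and then verifying the hypotheses of Kurtz's Theorem 2.11 --- global Lipschitz continuity of the drift on the compact set $E=[-c;c]^{\,\cdot}$ with $c=\sup_{t\le T}\|\vec{a}(t)\|+1$, the inclusion of a $\tfrac12$-neighborhood of the trajectory in $E$, and a uniform bound on $\lambda_N(\underline{\vec{a}})\int\|\underline{\vec{b}}-\underline{\vec{a}}\|\,\mu_N(\underline{\vec{a}},d\underline{\vec{b}})$ --- so that the convergence itself is imported as a black box. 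You instead re-derive the convergence from scratch via the random-time-change Poisson representation, a martingale/compensator decomposition, Gr\"onwall on $[0,\tau_N\wedge T]$, and the functional SLLN for unit-rate Poisson processes, with a bootstrap showing $\mathbb{P}(\tau_N<T)\to 0$. The two proofs handle the local-versus-global Lipschitz issue by the same essential device (truncation to a compact neighborhood of the limit trajectory: the paper's set $E$ plays exactly the role of your stopping time $\tau_N$), but yours is self-contained and makes the error propagation explicit, at the cost of redoing work that the paper delegates to the cited theorem; the paper's version is shorter and generalizes verbatim to the spatial setting it actually needs. One small point worth making explicit in your bootstrap: on $\{\tau_N\le T\}$ the process at time $\tau_N$ has just jumped out of $U$, and since jumps are $O(1/N)$ it then lies within $O(1/N)$ of $\partial U$, hence at distance at least $d-O(1/N)$ from $\vec{a}([0,T])$ where $d>0$ is the distance of the trajectory from $\mathbb{R}^L\setminus U$; this is what turns "the error is small up to $\tau_N\wedge T$" into "$\tau_N>T$ with high probability."
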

We do not give the proof of this theorem explicitly because it can be seen as a special case of the \emph{spatial ODE fluid limit}, see Theorem~\ref{thm_kurtz} in Section~\ref{sec:smrn}.

For instance, the ODEs for Example~\ref{ex:case-study} are
\begin{equation}\label{ex:ode}
\begin{split}
\frac{d}{dt} a_1(t) & = - \lambda a_1(t) + c \min \{ a_2,1 \}, \\
\frac{d}{dt} a_2(t) & = + \lambda a_1(t) - c \min \{ a_2,1 \}. \\
\end{split}
\end{equation}
Let us consider $\vec{\alpha}^0 = (\alpha_1^0, \alpha_2^0) = (1/4,3/4)$, i.e., we study the model under the conditions where there are initially 25\% of nodes in local state $A_1$ and 75\% of nodes in local state $A_2$. This leads to setting $a_1(0) = 1/4$ and $a_2(0) = 3/4$. The initial states in the  CTMC sequence $\big(\vec{A}_N(t)\big)_{t\geq0}$ are given by
\[
A_1(0) = \left\lfloor \frac{N}{4} \right\rfloor \ \ \text{and} \ \  A_2(0) = \left\lfloor \frac{3N}{4} \right\rfloor, \qquad N \geq 1 .
\]
Let us note that $\frac{1}{N} \vec{A}_N(0) \to (1/4,3/4)$, as $N \to \infty$, i.e., the initial populations of the rescaled Markov population processes converge to the densities given as the ODE initial condition.

\section{Mobile Reaction Networks}\label{sec:smrn}

A mobile reaction network is a Markov population process with an explicit notion of locality and mobility. Space is partitioned in a number of regions. Nodes within the same region may communicate with each other using the interaction functions (\ref{eq:reaction}). Additionally, nodes may move to neighboring regions by performing an unbiased RW. The purpose of this section is to show that a straightforward spatial ODE limit result for such a CTMC where the explicit location of nodes must be kept track of leads to a number of ODEs that grows with the number of regions, hindering the practical applicability of the analysis for large-scale mobile systems. This is an intermediate deterministic model that will be used in Section~\ref{sec:pde} to study a suitable continuous approximation of space by means of a system of $L$ PDEs.

\paragraph*{Space Model} Our space model consists of a lattice in the unit square with $(K+1)^2$ regions, denoted by $\mathcal{R}_K := \{ (i \ds,j \ds) \mid 0 \leq i,j \leq K \}$, where $\ds := 1/K$. To enhance readability, the boundary of the lattice will be denoted by $\Omega_K := \big\{ (x,y) \mid x \in \{0,1\} \lor y \in \{0,1\} \big\}$. A node in the interior $\mathcal{R}_K \setminus \Omega_K$ can travel to one of its neighboring regions $(x-\ds,y)$, $(x+\ds,y)$, $(x,y-\ds)$ and $(x,y+\ds)$. In the case where absorbing boundary conditions are assumed, nodes that migrate to the boundary $\Omega_K$ disappear. Instead, if reflective boundary conditions are in place, nodes at the boundary $\Omega_K$ can travel only to two or three regions. In the following, $\mathcal{N}(x,y)$ denotes the neighboring regions of $(x,y) \in \mathcal{R}_K$, e.g. $\mathcal{N}(0,0) = \{(\ds,0),(0,\ds)\}$.

\subsection{Absorbing Boundary Conditions}

\paragraph*{Stochastic Model} The state descriptor of our population process is
\[
\vec{A} := (A_1^{(x,y)}, \ldots, A_L^{(x,y)})_{(x,y) \in \mathcal{R}_K},
\]
which gives the agent populations in each local state at each region. This conservatively extends the stationary model in Section~\ref{sec:srn} by keeping track of the population of nodes in each of the (discrete) regions in the space domain.

Let us now proceed with the discussion of the transition rates for the RW and for the local interactions separately.

For the RW, we let $\mu_l^K \geq 0$ denote the migration rate for nodes of type-$l$ in a region contained in $\mathcal{R}_K$. The transition rates for an unbiased RW are defined as
\[
q\big(\vec{A} ; (\ldots, A_l^{(x,y)} - 1, A_l^{(\tilde{x},\tilde{y})} + \mathds{1}(\tilde{x},\tilde{y})) \big) =
\begin{cases}
\mu^K_l A_l^{(x,y)} & , \ (\tilde{x},\tilde{y}) \in \mathcal{R}_K \setminus \Omega_K, \\
| \mathcal{N}(x,y) \cap \Omega_K | \ \mu_l^K A_l^{(x,y)} & , \ \text{otherwise,}
\end{cases}
\]
where $(x,y) \in \mathcal{R}_K \setminus \Omega_K$, $(\tilde{x}, \tilde{y}) \in \mathcal{N}(x,y)$. The ellipsis indicates that, apart from the elements of the state vector that are explicitly written, no other elements are affected by the transition. The function $\mathds{1}(\cdot) := \mathds{1}_{\mathcal{R}_K \setminus \Omega_K}(\cdot)$ abbreviates the indicator of the set of inner regions $\mathcal{R}_K \setminus \Omega_K$.
This formally describes an unbiased RW since all the allowed neighboring regions are visited with equal probabilities. The boundary regions, instead, absorb any node that visits them.

The transition rates due to the interaction functions are similar to~(\ref{eq:reaction}). However, in our mobile reaction network we allow the rate parameters---but not the scaling parameter $N \in \mathbb{N}$---to be possibly dependent on the region. This might be useful, for instance, to model inhomogeneous server capacities across the spatial domain. To capture this, for any $1 \leq j \leq J$, $F_j$ is a function $F_j : \mathbb{R}^{L + P} \rightarrow \mathbb{R}$ where the $P$ parameters are given by continuous functions
$$\beta_p : [0;1]^2 \rightarrow \mathbb{R}_{\geq0},  \quad \text{with~} 1 \leq p \leq P.$$
The transition rates are given by
\[
q\Big(\vec{A}; \big( \ldots, A_1^{(x,y)} + d_{j1} - c_{j1}, \ldots, A_L^{(x,y)} + d_{jL} - c_{jL} \big) \Big) =
\sum_{j' \in S_j} F_{j'}\Big(A^{(x,y)}_1, \ldots, A^{(x,y)}_L, \beta_1(x,y), \ldots, \beta_P(x,y)\Big),
\]
where $(x,y) \in \mathcal{R}_K \setminus \Omega_K$. We notice that, indeed, this definition induces \emph{local interactions}, i.e., only the population levels related to the same region $(x,y)$ are affected by the transition. Thus, overall, our model consists of local communication between nodes in a region, together with a mobility model which can be dependent on the local state of the node.

\begin{figure}[t]
\begin{subfigure}{0.49\linewidth}
\centering
\includegraphics[width=4cm]{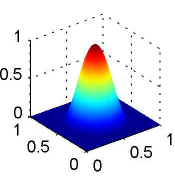}
\caption{}\label{fig:theta}
\end{subfigure}
\
\begin{subfigure}{0.49\linewidth}
\centering
\includegraphics[width=4cm]{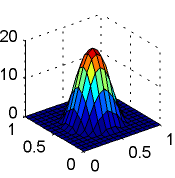}
\caption{}\label{fig:theta.dis}
\end{subfigure}
\caption{(a) Plot of the function $\theta(x,y)$; (b) Initial CTMC conditions from $\theta(x,y)$ for $K = 16$ and $N = 20$.}\label{fig_pde_0}
\end{figure}

\paragraph*{Spatial ODE Fluid Limit} We now consider a CTMC sequence $(\vec{A}_N(t))_{t \geq 0}$ indexed by $N$. The initial state of the $N$-th CTMC is given by $A^{(x,y)}_l(0) = \lfloor N \alpha_l^0(x,y) \rfloor$, where $\alpha_1^0, \ldots, \alpha_L^0 : [0;1]^2 \rightarrow \mathbb{R}_{\geq0}$ are now continuously differentiable functions which are zero at the boundary of $[0;1]^2$. The assumption on the scaling with $N$ is analogous to that of a stationary reaction network. Instead, requiring that the functions are zero at the boundary is for consistency with the fact that the boundary $\Omega_K$ of $\mathcal{R}_K$ does not contain any nodes. Formally, this means that $\alpha^0_l$ have to satisfy the Dirichlet Boundary Conditions (DBCs)
\begin{align}\label{eq:dbc}
\alpha_l^0(x,1) & = 0, & \alpha_l^0(x,0) & = 0, &  \alpha_l^0(1,y) & = 0, & \alpha_l^0(0,y) = 0,
\end{align}
for all $1 \leq l \leq L$, $x \in [0;1]$ and $y \in [0;1]$. For example, in the validation study in Section~\ref{sec:validation} we will consider initial conditions based on a function, denoted by $\theta$, which satisfies the DBCs and is defined as
\begin{align}\label{eq_theta}
\theta(x,y) =
\begin{cases}
0  & \ , \lVert (x,y) - (\frac{1}{2},\frac{1}{2}) \rVert ^2 \geq \frac{1}{4} \\
\frac{\text{exp}(4)}{\text{exp}\left( 1 /(\frac{1}{4} - \lVert (x,y) - (\frac{1}{2},\frac{1}{2}) \rVert ^2 )\right)}  & \ , \text{otherwise.}
\end{cases}
\end{align}
This function is plotted in Figure~\ref{fig:theta}. To show how to obtain suitable initial conditions for the CTMC that yield a spatial ODE fluid limit, let us consider the case $K = 16$ and $N = 20$ and assume that the initial CTMC conditions for some $l$ are given by $A^{(x,y)}_l(0) = \lfloor N \theta(x,y) \rfloor$, for all $(x,y) \in \mathcal{R}_K$. Then, these are represented by the values taken at the mesh visualized in Figure~\ref{fig:theta.dis}. In the limit $K,N \rightarrow \infty$ the normalized initial conditions $\lfloor N \theta(x,y) \rfloor / N$ converge to $\theta(x,y)$.

The fluid limit for a mobile reaction network is obtained similarly to the stationary case. Let us denote $\vec{a} \in \mathbb{R}^{\mathcal{R}_K \times \{1, \ldots, L\}}$ by $(a^{(x,y)}_1, \ldots, a^{(x,y)}_L)_{(x,y) \in \mathcal{R}_K}$. The corresponding system of ODEs is given by
\begin{equation}\label{eq_ode_sys}
\frac{d}{dt} a^{(x,y)}_l(t) = \mathds{1}(x,y) \Big( \mathcal{M}^{(x,y)}_l(\vec{a}(t)) + \mathcal{L}^{(x,y)}_l(\vec{a}(t)) \Big),
\end{equation}
where $(x,y) \in \mathcal{R}_K$, and, for all $1 \leq l \leq L$,
\begin{equation}\label{eq:ode.terms}
\begin{split}
\mathcal{M}^{(x,y)}_l(\vec{a}) & = \mu^K_l \sum_{(\tilde{x},\tilde{y}) \in \mathcal{N}(x,y)} (a_l^{(\tilde{x},\tilde{y})} - a_l^{(x,y)}), \\
\mathcal{L}^{(x,y)}_l(\vec{a}) & = \sum_{1 \leq j \leq J} (d_{jl} - c_{jl})
	f_j\big(a^{(x,y)}_1, \ldots, a^{(x,y)}_L, \beta_1(x,y), \ldots, \beta_P(x,y)\big).
\end{split}
\end{equation}
Essentially, $\mathcal{M}_l^{(x,y)}(\cdot)$ gives the \emph{diffusive} dynamics of the system, i.e., the contribution to the vector field due to the movement of type-$l$ nodes in each region $(x,y)$;
$\mathcal{L}_l^{(x,y)}(\cdot)$, instead, encodes the \emph{reactive} dynamics, by considering the effects on each variable of the local interactions $f_j$. The latter essentially corresponds to the vector field of the corresponding stationary reaction network, i.e., compare $\mathcal{L}^{(x,y)}_l(\cdot)$ with $\mathcal{L}_l(\cdot)$ in (\ref{eq:ode.term.stationary}). For instance, the ODEs of Example~\ref{ex:case-study} in $\mathcal{R}_K$ are
\begin{equation}\label{eq_ode_re_spatial}
\begin{split}
\frac{d}{dt} a^{(x,y)}_1(t) & = \mathds{1}(x,y) \Big( - \lambda a_1^{(x,y)}(t) + c \min \{ a^{(x,y)}_2,1 \}
+ \mu^K_1 \sum_{(\tilde{x},\tilde{y}) \in \mathcal{N}(x,y)} (a_1^{(\tilde{x},\tilde{y})} - a_1^{(x,y)}) \Big), \\
\frac{d}{dt} a^{(x,y)}_2(t) & = \mathds{1}(x,y) \Big( + \lambda a_1^{(x,y)}(t) - c \min \{ a^{(x,y)}_2,1 \}
+ \mu^K_2 \sum_{(\tilde{x},\tilde{y}) \in \mathcal{N}(x,y)} (a_2^{(\tilde{x},\tilde{y})} - a_2^{(x,y)}) \Big) . \\
\end{split}
\end{equation}
In each ODE, the first two terms in the right-hand side correspond to those in (\ref{ex:ode}). The second lines in each equation, instead, describe the spatial behavior of the nodes.


The following result formalizes the convergence to the spatial ODE fluid limit.

\begin{theorem}[ODE Fluid Limit of Spatial Networks]\label{thm_kurtz}
The ODE system (\ref{eq_ode_sys}) subjected to the initial condition
\[
a_l^{(x,y)}(0) = \alpha_l^0(x,y), \qquad (x,y) \in \mathcal{R}_K, \quad 1 \leq l \leq L,
\]
has a unique solution $\vec{a}$ in $\mathbb{R}^{\mathcal{R}_K \times \{1, \ldots, L\}}$. Moreover, under the assumption that for an arbitrary but fixed $T > 0$ the time domain of $\vec{a}$ contains $[0;T]$, it holds that
\begin{equation*}
\lim_{N \rightarrow \infty} \mathbb{P} \left\{ \sup_{0 \leq t \leq T} \left\| \frac{1}{N}A_N(t) - \vec{a}(t)\right\|_\infty > \varepsilon \right\} = 0, \quad \forall \varepsilon > 0,
\end{equation*}
where the supremum norm $\| \cdot \|_\infty$ on $\mathbb{R}^{\mathcal{R}_K \times \{1, \ldots, L\}}$ is given by $\| \vec{a} \|_\infty := \max_{(x,y) \in \mathcal{R}_K, 1 \leq l \leq L} | a_l^{(x,y)} |$.
\end{theorem}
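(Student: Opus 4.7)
The proof plan is to recognize Theorem~\ref{thm_kurtz} as a density-dependent Markov chain limit in the sense of Kurtz, where the state space has dimension $L \cdot |\mathcal{R}_K|$ (finite, since $K$ is fixed throughout this theorem) and to verify that the combined reactive and diffusive generator fits the classical framework of~\cite{kurtz-1970}. First I would handle existence and uniqueness of the ODE solution. The right-hand side of~(\ref{eq_ode_sys}) decomposes as $\mathcal{M}_l^{(x,y)} + \mathcal{L}_l^{(x,y)}$, where $\mathcal{M}_l^{(x,y)}$ is linear (hence globally Lipschitz) in $\vec{a}$, and $\mathcal{L}_l^{(x,y)}$ is locally Lipschitz in $\vec{a}$ by assumption \textbf{(A2)} applied componentwise, with the region-dependent parameters $\beta_p(x,y)$ fixed. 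Local Lipschitz continuity and Picard–Lindelöf then yield a unique maximal solution through the initial condition, which by hypothesis extends to $[0;T]$.

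Next I would set up the random-time-change (Poisson) representation
\begin{equation*}
\vec{A}_N(t) = \vec{A}_N(0) + \sum_{k} \vec{\ell}_k\, Y_k\!\left(\int_0^t \lambda_k^N(\vec{A}_N(s))\,ds\right),
\end{equation*}
where $k$ ranges over all possible transitions (both reaction-type, one per $(j,(x,y))$ pair with $(x,y) \in \mathcal{R}_K \setminus \Omega_K$, and migration-type, one per $(l,(x,y),(\tilde x,\tilde y))$ pair), the $Y_k$ are independent unit-rate Poisson processes, $\vec{\ell}_k$ are the integer jump vectors, and $\lambda_k^N$ are the corresponding rates. By assumption \textbf{(A1)} the reactive rates satisfy $\lambda_k^N(\vec A)/N = f_j(\vec A^{(x,y)}/N,\vec\beta(x,y))$, and the migration rates are already linear so they automatically scale as $N\mu_l^K (a_l^{(x,y)} - a_l^{(\tilde x,\tilde y)})$ after division by $N$. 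Hence the scaled process $\bar{\vec A}_N := \vec A_N/N$ has the density-dependent form required for Kurtz's limit theorem, with jumps of size $1/N$ and drift that converges uniformly on compacts to the vector field in~(\ref{eq_ode_sys}). Using the centered Poisson martingales $\tilde Y_k(u) = Y_k(u) - u$, I would write
\begin{equation*}
\bar{\vec A}_N(t) - \vec{a}(t) = \bigl(\bar{\vec A}_N(0) - \vec a(0)\bigr) + \int_0^t \bigl[F(\bar{\vec A}_N(s)) - F(\vec a(s))\bigr]\,ds + M_N(t),
\end{equation*}
where $F$ denotes the right-hand side of~(\ref{eq_ode_sys}) and $M_N(t) := \frac{1}{N}\sum_k \vec{\ell}_k \tilde Y_k(N\int_0^t (\lambda_k^N/N)\,ds)$ is a martingale term.

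The main obstacle, and the only non-routine step, is that assumption \textbf{(A2)} only gives local Lipschitz continuity of the $f_j$. To circumvent this I would introduce the compact tube $\mathcal{K} := \{\vec z : \operatorname{dist}(\vec z, \{\vec a(t): 0 \le t \le T\}) \le 1\}$ and define a globally Lipschitz modification $F^\star$ of $F$ that agrees with $F$ on $\mathcal{K}$ (via a smooth cutoff, or by replacing the $f_j$ with a Lipschitz extension of $f_j|_{\mathcal{K}}$). On the event $\tau_N := \inf\{t \ge 0 : \bar{\vec A}_N(t) \notin \mathcal{K}\} > T$, the truncated and untruncated processes coincide. Applying Gronwall's inequality to the truncated system yields
\begin{equation*}
\sup_{0 \le t \le T} \|\bar{\vec A}_N(t\wedge\tau_N) - \vec a(t\wedge\tau_N)\|_\infty \le e^{CT}\Bigl(\|\bar{\vec A}_N(0)-\vec a(0)\|_\infty + \sup_{0 \le t \le T}\|M_N(t)\|_\infty\Bigr),
\end{equation*}
for some constant $C$ depending on the Lipschitz constant of $F^\star$. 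Doob's maximal inequality applied componentwise to $M_N$ together with the fact that the quadratic variation of each $\tilde Y_k(N\cdot)/N$ is of order $1/N$ on the bounded rate region gives $\sup_{0\le t \le T}\|M_N(t)\|_\infty \to 0$ in probability. Combined with $\bar{\vec A}_N(0) \to \vec a(0)$ (by the floor initial condition), this shows $\mathbb{P}(\tau_N \le T) \to 0$ and the desired uniform convergence in probability on $[0;T]$, completing the proof.
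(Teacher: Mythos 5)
Your proposal is correct, and it reaches the conclusion by a genuinely different (more self-contained) route than the paper. The paper's proof works at the level of the jump-chain/holding-time description: it computes the drift $D_N(\underline{\vec{a}}) = \lambda_N(\underline{\vec{a}})\int(\underline{\vec{b}}-\underline{\vec{a}})\,\mu_N(\underline{\vec{a}},d\underline{\vec{b}})$, uses \textbf{(A1)} to show this drift is a single $N$-independent function $D$ whose components are exactly $\mathds{1}(x,y)(\mathcal{M}^{(x,y)}_l+\mathcal{L}^{(x,y)}_l)$, restricts to the compact box $E=[-c;c]^{\mathcal{R}_K\times\{1,\ldots,L\}}$ containing a tube of radius $1/2$ around the trajectory so that \textbf{(A2)} upgrades to Lipschitz continuity there, checks a uniform bound on $\sup_N\sup_{\underline{\vec{a}}}\lambda_N(\underline{\vec{a}})\int\|\underline{\vec{b}}-\underline{\vec{a}}\|\,\mu_N(\underline{\vec{a}},d\underline{\vec{b}})$, and then cites Theorem 2.11 of \cite{kurtz-1970} as a black box. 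You instead reprove the limit theorem from scratch via the Poisson random-time-change representation, the centered-martingale decomposition, Doob's maximal inequality, and Gronwall's lemma. The key structural idea is identical in both — the density-dependence check via \textbf{(A1)} and the truncation to a compact tube around $\vec{a}(\cdot)$ to convert local into global Lipschitz continuity, with a stopping-time/coincidence argument — so your approach buys transparency (the $O(1/N)$ quadratic variation of the martingale term is visible, and no external theorem is needed) at the cost of having to carry out the martingale estimates that the paper delegates to Kurtz. One small imprecision: a single migration transition from $(x,y)$ to $(\tilde{x},\tilde{y})$ has rate $\mu_l^K A_l^{(x,y)}$, depending only on the source population; the difference $a_l^{(\tilde{x},\tilde{y})}-a_l^{(x,y)}$ arises only in the net drift after summing the incoming and outgoing jump vectors, not as the rate of any individual transition as your phrasing suggests. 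This does not affect the validity of the argument.
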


This result can be recovered from Theorem 2.11 in \cite{kurtz-1970}, by a rewrite according to the notation of reaction networks adopted in this paper. However, this version represents a slight modification of that theorem in that we relax the assumption of \emph{global} Lipschitz continuity to \emph{local} Lipschitz continuity given by {\bfseries (A2)}, by assuming that the time domain of the unique ODE solution contains $[0;T]$. It is restated in this way because it is this version that will be used in the proof of our main result, Theorem \ref{thm_main}.

\subsection{Reflective Boundary Conditions}

\paragraph*{Stochastic Model} The case of reflective boundaries differs from that of absorbing boundaries only on the boundary $\Omega_K$. In particular, the migration rates are given by $q\big(\vec{A} ; (\ldots, A_l^{(x,y)} - 1, A_l^{(\tilde{x},\tilde{y})} + 1) \big) = \mu^K_l A_l^{(x,y)}$, where $(x,y) \in \mathcal{R}_K$, $(\tilde{x},\tilde{y}) \in \mathcal{N}(x,y)$ and $1 \leq l \leq L$. The reaction rates, instead, are
\[
q\Big(\vec{A}; \big( \ldots, A_1^{(x,y)} + d_{j1} - c_{j1}, \ldots, A_L^{(x,y)} + d_{jL} - c_{jL} \big) \Big) =
\sum_{j' \in S_j} F_{j'}\Big(A^{(x,y)}_1, \ldots, A^{(x,y)}_L, \beta_1(x,y), \ldots, \beta_P(x,y)\Big),
\]
where $(x,y) \in \mathcal{R}_K$.

\paragraph*{Spatial ODE Fluid Limit} The initial states of the underlying CTMC sequence $(\vec{A}_N(t))_{t \geq 0}$ are given by $A^{(x,y)}_l(0) = \lfloor N \alpha_l^0(x,y) \rfloor$, where $\alpha_1^0, \ldots, \alpha_L^0 : [0;1]^2 \rightarrow \mathbb{R}_{\geq0}$ are now continuously differentiable functions whose derivatives are zero at the boundary, that is
\begin{align}\label{eq:nbc}
0 & = \partial_x \alpha_l^0(1,y) &
0 & = \partial_x \alpha_l^0(0,y) &
0 & = \partial_y \alpha_l^0(x,1) &
0 & = \partial_y \alpha_l^0(x,0)
\end{align}
for all $x,y \in [0;1]$ and $1 \leq l \leq L$. The conditions (\ref{eq:nbc}) are formally known as (zero) Neumann boundary conditions, short NBCs. It can be shown that (\ref{eq_theta}) satisfies, apart from the DBCs (\ref{eq:dbc}), also the NBCs (\ref{eq:nbc}).

The fluid limit for a mobile reaction network is given by
\begin{equation}\label{eq_ode_sys_nbcs}
\frac{d}{dt} a^{(x,y)}_l(t) = \mathcal{M}^{(x,y)}_l(\vec{a}(t)) + \mathcal{L}^{(x,y)}_l(\vec{a}(t)) ,
\end{equation}
where $(x,y) \in \mathcal{R}_K$ and $\mathcal{M}^{(x,y)}_l$, $\mathcal{L}^{(x,y)}_l$ are as in (\ref{eq:ode.terms}). In particular, in the case of reflective boundary conditions, the ODEs of Example~\ref{ex:case-study} in $\mathcal{R}_K$ are
\begin{equation*}
\begin{split}
\frac{d}{dt} a^{(x,y)}_1(t) & = - \lambda a_1^{(x,y)}(t) + c \min \{ a^{(x,y)}_2,1 \}
+ \mu^K_1 \sum_{(\tilde{x},\tilde{y}) \in \mathcal{N}(x,y)} (a_1^{(\tilde{x},\tilde{y})} - a_1^{(x,y)}) , \\
\frac{d}{dt} a^{(x,y)}_2(t) & = + \lambda a_1^{(x,y)}(t) - c \min \{ a^{(x,y)}_2,1 \}
+ \mu^K_2 \sum_{(\tilde{x},\tilde{y}) \in \mathcal{N}(x,y)} (a_2^{(\tilde{x},\tilde{y})} - a_2^{(x,y)}) . \\
\end{split}
\end{equation*}

The following result formalizes the convergence to the spatial ODE fluid limit.

\begin{theorem}[ODE Fluid Limit of Spatial Networks]\label{thm_kurtz_nbcs}
The ODE system (\ref{eq_ode_sys_nbcs}) subjected to the initial condition
\[
a_l^{(x,y)}(0) = \alpha_l^0(x,y), \qquad (x,y) \in \mathcal{R}_K, \quad 1 \leq l \leq L,
\]
has a unique solution $\vec{a}$ in $\mathbb{R}^{\mathcal{R}_K \times \{1, \ldots, L\}}$. Moreover, under the assumption that for an arbitrary but fixed $T > 0$ the time domain of $\vec{a}$ contains $[0;T]$, it holds that
\begin{equation*}
\lim_{N \rightarrow \infty} \mathbb{P} \left\{ \sup_{0 \leq t \leq T} \left\| \frac{1}{N}A_N(t) - \vec{a}(t)\right\|_\infty > \varepsilon \right\} = 0, \quad \forall \varepsilon > 0 .
\end{equation*}
\end{theorem}


\section{Spatial Reaction-Diffusion Limit}\label{sec:pde}

The spatial ODE fluid limit of Theorem~\ref{thm_kurtz} holds for any arbitrary but fixed $K$. However, for large $K$, the analysis may become infeasible because the ODE system size grows with $L(K+1)^2$. The purpose of this section is to consider a limit behavior that allows the analysis to be independent from $K$. To do so, we study a limit PDE of reaction-diffusion type.

\paragraph*{Space Scaling} In addition to a suitable scaling of the population sizes with $N$, we require a proper scaling of the migration rates with $K$. Specifically, for all $1 \leq l \leq L$, it must hold that
\begin{equation*}
\textbf{\bfseries (A3):} \quad \mu_l^K = \mu_l K^2.
\end{equation*}

Let us motivate this scaling of space by analyzing a continuous-time unbiased RW, denoted by $(W_k(t))_{t\geq0}$, in two dimensions $\frac{1}{k} \mathbb{Z} \times \frac{1}{k} \mathbb{Z}$, with step size $1/k$ and sojourn time at each state that is exponentially distributed with mean $1/r_k$. The following result holds.
\begin{proposition}\label{prop_scaling}
Let us assume that $W_k(0) = (0,0)$ and denote by $d_k(t) := \| W_k(t) \|$ the Euclidian distance of $W_k(t)$ from the origin after time $t$. Then, it holds that $\mathbb{E}(d_k(t)^2) = \left(\frac{r_k}{k^2}\right) t$ for all $k \geq 1$ and $t \geq 0$.
\end{proposition}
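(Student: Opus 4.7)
The plan is to exploit the compound-Poisson structure of the walk: writing $W_k(t)$ as a random sum of i.i.d.\ jump vectors indexed by a Poisson process, the second moment factorises into per-jump contributions.

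First, I would formalise the decomposition. Let $(N_k(t))_{t\geq0}$ be the Poisson process of jump epochs, which has rate $r_k$ since the sojourn times are i.i.d.\ exponential with mean $1/r_k$. Let $\xi_1,\xi_2,\ldots$ be i.i.d.\ jump vectors in $\mathbb{R}^2$, independent of $N_k$, each taking one of the four values $(\pm 1/k, 0)$, $(0, \pm 1/k)$ with probability $1/4$ each. Then $W_k(t) = \sum_{i=1}^{N_k(t)} \xi_i$. A direct calculation gives $\mathbb{E}[\xi_i] = (0,0)$ and $\mathbb{E}[\|\xi_i\|^2] = 1/k^2$.

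Next I would compute the second moment by conditioning on $N_k(t)$. Expanding
\[
\|W_k(t)\|^2 = \sum_{i=1}^{N_k(t)} \|\xi_i\|^2 + \sum_{\substack{1 \leq i,j \leq N_k(t) \\ i \neq j}} \xi_i \cdot \xi_j,
\]
and using independence and zero mean of the $\xi_i$ conditional on $N_k(t) = n$, the cross terms vanish in expectation. Hence $\mathbb{E}\big[\|W_k(t)\|^2 \,\big|\, N_k(t) = n\big] = n/k^2$. Taking expectations and using $\mathbb{E}[N_k(t)] = r_k t$ yields
\[
\mathbb{E}\big[d_k(t)^2\big] = \frac{1}{k^2}\,\mathbb{E}[N_k(t)] = \frac{r_k}{k^2}\,t,
\]
which is the claim.

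There is no real obstacle here; the only point that deserves care is justifying that the cross terms vanish when the index of summation is itself random. The cleanest way is to write $\xi_i \cdot \xi_j \,\mathbf{1}\{i \leq N_k(t)\}\mathbf{1}\{j \leq N_k(t)\}$ and observe that $\{N_k(t) \geq \max(i,j)\}$ is determined by the Poisson epochs, which are independent of the jump directions, so for $i \neq j$ we still get $\mathbb{E}[\xi_i \cdot \xi_j \,\mathbf{1}\{\ldots\}] = \mathbb{E}[\xi_i]\cdot\mathbb{E}[\xi_j]\,\mathbb{P}(\cdot) = 0$. Equivalently, one may invoke Wald's second-moment identity for compound Poisson sums. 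Combined with $\mathbb{E}[N_k(t)] = r_k t$, this gives the stated identity.
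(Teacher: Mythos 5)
Your proof is correct and follows essentially the same route as the paper's: both represent $W_k(t)$ as the discrete-time unbiased walk evaluated at a Poisson number $N_k(t)$ of jumps (the paper calls this uniformization, you call it a compound Poisson sum -- for a chain with constant exit rate $r_k$ these coincide), establish that the expected squared displacement after $n$ jumps is $n/k^2$, and average over the Poisson distribution to obtain $(r_k/k^2)\,t$. The only difference is cosmetic: you justify $\mathbb{E}\bigl[\|W_k(t)\|^2 \mid N_k(t)=n\bigr] = n/k^2$ explicitly via the vanishing cross terms, a step the paper asserts without proof inside its transition-matrix computation.
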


We notice that setting $r_k = k^2 r_1$ for all $k \geq 1$ implies that
$\mathbb{E}(d_k(t)^2) = \mathbb{E}(d_1(t)^2)$ for all $k \geq 1$ and for all $t \geq 0$.
This relation states that if the step size of the agent is $1/k$, then its migration rate should be equal to $k^2 r_1$, in order for the RW to always cover \emph{the same distance} on average independently of $k$. Therefore, {\bfseries (A3)} ensures that the average distance covered by a node across the unit square is invariant with respect to the lattice granularity.

\subsection{Absorbing Boundary Conditions}

\paragraph*{Illustrating Example} The spatial PDE limit is obtained by a suitable manipulation of the diffusive dynamics of the ODE system. In order to give some intuition on this procedure, let us now consider an example of a purely diffusive population of nodes of the same type. Its CTMC population process is shown to converge to the celebrated \emph{heat equation} (e.g,~\cite{okubo80}).
\begin{example}\label{ex:heat}
A purely diffusive process can be modeled with a mobile reaction network with $L = 1$ and $J=1$, where the interaction function is given by:
\[
1: \quad 1A_1 \xrightarrow{F_1} 1 A_1,
\]
for any $F_1 \equiv 0$.
\end{example}
The above example indeed gives a purely diffusive process, essentially because there are no stationary, local  transitions. Let us point out that this somewhat degenerate mobile reaction network only serves an illustrative purpose.

Using the results presented in the previous section, for any $K$, the rescaled CTMC \emph{density process} $(\frac{1}{N} \vec{A}_N(t))_{t \geq 0}$ for Example~\ref{ex:heat} converges in probability, as $N \rightarrow \infty$, to the solution of the system of $(K+1)^2$ ODEs
\[
\frac{d}{dt} a_1^{(x,y)}(t) = \mathds{1}(x,y) \mu^K_1 \sum_{(\tilde{x},\tilde{y}) \in \mathcal{N}(x,y)} \big(  a_1^{(\tilde{x},\tilde{y})}(t) - a_1^{(x,y)}(t) \big),
\]
where $a_1^{(x,y)}(0) = \alpha_1^0(x,y)$ for all $(x,y) \in \mathcal{R}_K$, where $\alpha_1^0(x,y)$ is any function that satisfies the DBCs (\ref{eq:dbc}).

We continue by observing that
\begin{align}\label{eq_first_ex_ode_dbcs}
\frac{d}{dt} a_1^{(x,y)}(t) & = \mathds{1}(x,y) \frac{\mu^K_1}{K^2} \lapl a_1^{(x,y)}(t) , &
\lapl a_1^{(x,y)}(t) & := \sum_{(\tilde{x},\tilde{y}) \in \mathcal{N}(x,y)} \frac{a_1^{(\tilde{x},\tilde{y})}(t) - a_1^{(x,y)}(t)}{(1/K)^2},
\end{align}
where $(x,y) \in \mathcal{R}_K$ and $\lapl a_1^{(x,y)}$ refers to the discrete Laplace operator in a two dimensional lattice with neighboring distance $\ds = 1/K$, if $(x,y) \in \mathcal{R}_K \setminus \Omega_K$. In the case where $(x,y) \in \Omega_K$, instead, $\lapl a_1^{(x,y)}$ is not the discrete Laplace operator, since $|\mathcal{N}(x,y)| < 4$.
Using \textbf{(A3)} by setting $\mu^K_1 := \mu_1K^2$, one can simplify (\ref{eq_first_ex_ode_dbcs}) to the ODE system
\begin{equation}\label{eq_first_ode}
\begin{split}
\frac{d}{dt} a_1^{(x,y)}(t) = \mathds{1}(x,y) \mu_1 \lapl a_1^{(x,y)}(t), \quad (x,y) \in \mathcal{R}_K.
\end{split}
\end{equation}

The general result presented in the next section allows us to conclude that the solution to this ODE system converges, as $K \rightarrow \infty$, to that of the heat equation
\begin{equation}\label{eq_first_pde}
\partial_t \alpha_1 = \mu_1 \bigtriangleup \alpha_1 , \ (x,y,t) \in [0;1]^2 \times \mathbb{R}_{\geq0},
\end{equation}
where $\bigtriangleup$ denotes the continuous Laplace operator, subjected to the initial condition $\alpha_1^0$ and the DBCs
\begin{align*}
\alpha_1(1,y,t) & = 0, & \alpha_1(0,y,t) & = 0, & \alpha_1(x,1,t) & = 0, &  \alpha_1(x,0,t) = 0
\end{align*}
for all $t \geq 0$, $x \in [0;1]$ and $y \in [0;1]$.

Thus, we are dividing the question of establishing the convergence of the CTMCs $(\frac{1}{N} \vec{A}_N(t))_{t \geq 0}$ to the solution of the PDE~(\ref{eq_first_pde}) as $K, N \rightarrow \infty$ in two sub-problems where the two limits are studied separately: the first sub-problem involves convergence to a fluid limit with discrete regions, in the classical sense of Kurtz by sending $N$ to infinity; after having established such a limit behavior, the second sub-problem studies the convergence of the spatial ODE limit~(\ref{eq_first_ode}) to the PDE~(\ref{eq_first_pde}), by sending $K$ to infinity.

We wish to provide an insight into the proof technique by noticing that the latter issue only involves deterministic quantities. Hence, it is possible to use classical results from numerical analysis in order to study the convergence. To illustrate our strategy, we start by writing an approximating sequence, denoted by $(\vec{a}(m))_{m\geq0}$, to solve the spatial ODE limit~(\ref{eq_first_ode}) with the Euler method using  a fixed step size $\dt$. The iterations are given by
\begin{equation}\label{eq_first_euler_dbcs}
\begin{split}
a_1^{(x,y)}(0) & := \alpha_1^0(x,y), \\
a_1^{(x,y)}(m + 1) & := a_1^{(x,y)}(m) + \mathds{1}(x,y) \dt \big( \mu_1 \lapl a_1^{(x,y)}(m) \big) ,
\end{split}
\end{equation}
with $(x,y) \in \mathcal{R}_K$ and  $m \geq 0$. It is well-known that the sequence converges to the solution of the ODE system if $\dt \rightarrow 0$~\cite{Gear1971}.

The crucial observation is that  this very sequence can be interpreted as a finite difference scheme (e.g., \cite{thomas95,strikwerda89}) for solving the PDE~(\ref{eq_first_pde}).
This can be seen by discretizing the domain $[0;1]^2 \times \mathbb{R}_{\geq0}$ into $\mathcal{R}_K \times \{ m \dt \mid m \geq 0 \}$ and observing that a sufficiently smooth solution of (\ref{eq_first_pde}) satisfies
\[
\frac{\alpha_1(x,y,(m + 1)\dt) - \alpha_1(x,y,m\dt)}{\dt} =
\mu_1 \sum_{(\tilde{x},\tilde{y}) \in \mathcal{N}(x,y)} \frac{\alpha_1(\tilde{x},\tilde{y},m\dt) - \alpha_1(x,y,m\dt)}{\ds^2} \\
+ \mathcal{O}(\dt + \ds^2),
\]
for all $(x,y) \in \mathcal{R}_K \setminus \Omega_K$, cf. \cite{thomas95}. As the DBCs enforce $\alpha_1(x,y,(m+1)\dt) = 0$ for all $(x,y) \in \Omega_K$, the PDE solution of (\ref{eq_first_pde}) can be approximated by the finite difference scheme that computes the sequence $(\vec{\alpha}(m))_{m\geq0}$, where
\begin{align*}\label{eq_first_pde_diff_scheme}
\alpha_1^{(x,y)}(m + 1) = \alpha_1^{(x,y)}(m) + \mathds{1}(x,y) \dt \Big( \mu_1 \sum_{(\tilde{x},\tilde{y}) \in \mathcal{N}(x,y)} \frac{\alpha_1^{(\tilde{x},\tilde{y})}(m) - \alpha_1^{(x,y)}(m)}{\ds^2} \Big)
= \alpha_1^{(x,y)}(m)  + \mathds{1}(x,y) \dt \left( \mu_1 \lapl \alpha_1^{(x,y)}(m) \right),
\end{align*}
which corresponds to (\ref{eq_first_euler_dbcs}) as required.

Therefore, the desired result of convergence of the stochastic process to the PDE limit amounts to  proving that such a finite difference scheme does converge to the solution of (\ref{eq_first_pde}) as $\dt,\ds \rightarrow 0$. The main result of this paper, presented next,  generalizes this convergence to a class of models where nodes also feature local interactions.

\paragraph*{General PDE Limit}
Let us now consider the ODE limit~(\ref{eq_ode_sys}) for a generic mobile reaction network in $\mathcal{R}_K$. Observe that~\textbf{(A3)} yields
\[
\mathcal{M}^{(x,y)}_l(\vec{a}) = \mu_l \left( \sum_{(\tilde{x},\tilde{y}) \in \mathcal{N}(x,y)} \frac{a_l^{(\tilde{x},\tilde{y})} - a_l^{(x,y)}}{(1/K)^2} \right) = \mu_l \lapl a_l^{(x,y)}.
\]
This allows us to rewrite the system of ODEs (\ref{eq_ode_sys}) as
\begin{equation*}
\frac{d}{dt} a^{(x,y)}_l(t) = \mathds{1}(x,y) \Big(\mu_l \lapl a_l^{(x,y)}(t) + \mathcal{L}^{(x,y)}_l(\vec{a}(t)) \Big),
\end{equation*}
where $a^{(x,y)}_l(0) = \alpha_l^0(x,y)$ for $(x,y) \in \mathcal{R}_K$ and $1 \leq l \leq L$.

Similarly to (\ref{eq_first_euler_dbcs}), the approximating sequence $(\vec{a}(m))_{m\geq0}$ arising from the Euler method with a fixed $\dt$ to solve (\ref{eq_ode_sys}) is given by:
\begin{align}\label{eq_ode_euler}
a_l^{(x,y)}(m+1) := a_l^{(x,y)}(m) + \mathds{1}(x,y) \cdot \dt \cdot \Big( \mu_l \lapl a_l^{(x,y)}(m) + \mathcal{L}^{(x,y)}_l\big(\vec{a}(m)\big) \Big), \qquad m \geq 0 .
\end{align}
We now observe that this sequence corresponds to a finite difference scheme for the \emph{reaction-diffusion} PDE system
\begin{align}\label{eq_pde}
\partial_t \alpha_l = \mu_l \bigtriangleup \alpha_l + \sum_{1 \leq j \leq J} (d_{jl} - c_{jl}) f_j(\alpha_1, \ldots, \alpha_L, \beta_1, \ldots, \beta_P) ,
\end{align}
where $(x,y,t) \in [0;1]^2 \times \mathbb{R}_{\geq0}$ and $1 \leq l \leq L$, subjected to the DBCs
\begin{align}\label{eq_pde_dbcs}
\alpha_l(1,y,t) & = 0, & \alpha_l(0,y,t) & = 0, & \alpha_l(x,1,t) & = 0, & \alpha_l(x,0,t) = 0
\end{align}
for all $1 \leq l \leq L$, $0 \leq t \leq T$, $x \in [0;1]$ and $y \in [0;1]$.

Therefore, in order to prove that the ODEs~(\ref{eq_ode_sys}) converge to the PDEs~(\ref{eq_pde}), it remains to show that the approximating sequence (\ref{eq_ode_euler}) interpreted as a finite difference scheme for (\ref{eq_pde}) is convergent with respect to the supremum norm. This is done in the following.
\begin{theorem}\label{thm_scheme}
Assume that for a fixed $T > 0$, the family of functions $(\alpha_1,\ldots,\alpha_L)$ on $[0;1]^2 \times [0;T]$ describes for all $0 < T' \leq T$ the unique solution of (\ref{eq_pde}) subjected to the DBCs (\ref{eq_pde_dbcs}) on $[0;1]^2 \times [0;T']$. Further, assume that
\[ \partial_{txx} \alpha_l, \ \partial_{xxxx} \alpha_l, \ \partial_{yxx} \alpha_l, \ \partial_{tyy} \alpha_l, \ \partial_{xyy} \alpha_l, \ \partial_{yyyy} \alpha_l, \]
where $1 \leq l \leq L$, exist and are continuous. For $M \geq 1$, define $\dt := T / M$, $r_l := \mu_l {\dt}/{\ds^2}$ and assume that $M$ is large enough such that $r_l \leq 1/4$ for all $1 \leq l \leq L$.

Then, there exist $K_0, M_0 \geq 1$ and $C > 0$ such that the Euler method (\ref{eq_ode_euler}) interpreted as a finite difference scheme of (\ref{eq_pde}) subjected to the DBCs (\ref{eq_pde_dbcs}) satisfies
\[ \max_{0 \leq m \leq M} \max_{(x,y) \in \mathcal{R}_K, 1 \leq l \leq L} | \alpha_l(x,y,m\dt) - a^{(x,y)}_l(m) | \leq C(\ds^2 + \dt)  \]
for all $K \geq K_0$, $M \geq M_0$.
\end{theorem}

Theorem \ref{thm_kurtz} and Theorem \ref{thm_scheme} can now be used to show that the CTMC sequence converges in probability to the solution of the PDE system (\ref{eq_pde}) subjected to the DBCs (\ref{eq_pde_dbcs}) on $[0;1]^2 \times [0;T]$. This is stated in the following, which is the most important result of this paper.

\begin{theorem}[Spatial PDE Limit]\label{thm_main}
Assume that for a fixed $T > 0$, the family of functions $(\alpha_1,\ldots,\alpha_L)$ on $[0;1]^2 \times [0;T]$ describes for all $0 < T' \leq T$ the unique solution of (\ref{eq_pde}) subjected to the DBCs (\ref{eq_pde_dbcs}) on $[0;1]^2 \times [0;T']$. Further, assume that
\[ \partial_{txx} \alpha_l, \ \partial_{xxxx} \alpha_l, \ \partial_{yxx} \alpha_l, \ \partial_{tyy} \alpha_l, \ \partial_{xyy} \alpha_l, \ \partial_{yyyy} \alpha_l, \]
where $1 \leq l \leq L$, exist and are continuous. Then, the time domain of the unique solution $\vec{a}$ of (\ref{eq_ode_sys}) contains $[0;T]$ and for each $\varepsilon > 0$ it holds that
\[
\lim_{K \rightarrow \infty} \lim_{N \rightarrow \infty} \mathbb{P}  \Bigg\{ \sup_{0 \leq t \leq T} \max_{(x,y) \in \mathcal{R}_K, 1 \leq l \leq L}
\left| \frac{1}{N}A_l^{(x,y)}(t) - \alpha_l(x,y,t) \right|  > \varepsilon \Bigg\} = 0.
\]
\end{theorem}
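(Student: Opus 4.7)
The plan is to chain the two previously established limits via a triangle inequality and iterated limits. Let $\vec{a}^K$ denote the unique solution of the ODE system (\ref{eq_ode_sys}) at lattice granularity $K$. The first preliminary task is to guarantee that the time domain of $\vec{a}^K$ covers $[0;T]$. I would deduce this from the hypothesized PDE solution $\alpha$: continuity of $\alpha$ on the compact cylinder $[0;1]^2 \times [0;T]$ gives an a-priori bound, which via Theorem \ref{thm_scheme} transfers to the Euler iterates and, by standard convergence of Euler's method on intervals where $\vec{a}^K$ exists, to $\vec{a}^K$ itself. Combined with the local Lipschitz condition \textbf{(A2)}, this a-priori bound precludes finite-time blow-up and yields global existence on $[0;T]$.

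The core of the argument is the deterministic intermediate claim
\begin{equation*}
\max_{(x,y) \in \mathcal{R}_K,\, 1 \le l \le L}\ \sup_{0 \le t \le T}\ \bigl| a_l^{K,(x,y)}(t) - \alpha_l(x,y,t) \bigr| \longrightarrow 0 \quad \text{as } K \to \infty.
\end{equation*}
I would prove this by introducing the Euler discretization (\ref{eq_ode_euler}) with step $\Delta t$ as a bridge that plays a dual role: it approximates $\vec{a}^K$ as $\Delta t \to 0$ for fixed $K$ (classical ODE numerics), while it simultaneously coincides with the finite-difference scheme whose convergence to $\alpha$ as $\Delta s, \Delta t \to 0$ is precisely Theorem \ref{thm_scheme}. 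For $K$ large and $\Delta t$ chosen in dependence on $K$ subject to the CFL-type constraint $r_l \le 1/4$, a triangle inequality bounds the gap between $\vec{a}^K(m\Delta t)$ and $\alpha(\cdot, m\Delta t)$ on the time mesh. Uniform continuity of $\alpha$ in $t$ together with Lipschitz regularity of $\vec{a}^K$ (inherited via the a-priori bound and the spatial smoothness that $\vec{a}^K$ gains from its proximity to $\alpha$) then extends the bound from the mesh to all of $[0;T]$.

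With the intermediate claim in hand, the theorem follows quickly. For any $\varepsilon > 0$, pick $K_0$ such that for all $K \ge K_0$ the supremum above is below $\varepsilon/2$. Then for such $K$,
\begin{equation*}
\Bigl\{ \sup_{t} \max_{(x,y),l} \bigl|N^{-1} A_l^{(x,y)}(t) - \alpha_l(x,y,t)\bigr| > \varepsilon \Bigr\} \subseteq \Bigl\{ \sup_{t} \|N^{-1}\vec{A}_N(t) - \vec{a}^K(t)\|_\infty > \varepsilon/2 \Bigr\},
\end{equation*}
and the probability of the right-hand event vanishes as $N \to \infty$ by Theorem \ref{thm_kurtz}. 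Taking $\lim_N$ first and $\lim_K$ second yields the stated convergence.

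The main obstacle will be the balancing in the second paragraph: the Euler-to-ODE error constant at fixed $K$ is governed by the Lipschitz constant of the ODE vector field, which scales like $K^2$ via $\mu_l^K = \mu_l K^2$. Accordingly $\Delta t$ must shrink faster than $K^{-2}$; reassuringly, this is precisely what the CFL restriction $\Delta t \le \Delta s^2/(4 \max_l \mu_l) = 1/(4 K^2 \max_l \mu_l)$ already enforces. A related subtlety is extending the mesh-point estimate to all $t \in [0;T]$ uniformly in $K$, because the naive bound on $\partial_t \vec{a}^K$ through the $K^2$-scaled discrete Laplacian is not uniform in $K$; this I would tame by exploiting the fact that, near the limit, $\vec{a}^K$ inherits discrete-spatial smoothness from $\alpha$ so that $K^2$ times the discrete Laplacian of $\vec{a}^K$ remains bounded, closing the argument via a Gr\"onwall-type estimate.
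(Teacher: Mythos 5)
Your proposal follows essentially the same route as the paper: iterated limits, with the Euler scheme~(\ref{eq_ode_euler}) serving as the bridge between the spatial ODE system~(\ref{eq_ode_sys}) and the PDE (via Theorem~\ref{thm_scheme}), a four-term triangle inequality on the time mesh, and Theorem~\ref{thm_kurtz} disposing of the inner limit in $N$; the probabilistic conclusion via $\mathbb{P}\{\phi_1>\varepsilon\}\leq\mathbb{P}\{\phi_2>\varepsilon/2\}+\mathbb{P}\{\phi_3>\varepsilon/2\}$ is identical. Two remarks on execution. First, for global existence on $[0;T]$ the paper does not run your continuation argument directly on the original vector field; it replaces the $f_j$ by globally Lipschitz truncations $\underline{f_j}$ (Proposition~\ref{prop_loc_glob_lip}), obtains a globally defined solution $\underline{\vec{a}}$ by Picard--Lindel\"of, shows via the scheme that $\underline{\vec{a}}$ stays in the region where the truncation is inactive, and concludes $\vec{a}=\underline{\vec{a}}$ on $[0;T]$; this cleanly sidesteps the mild circularity in ``use Euler convergence to $\vec{a}^K$ to bound $\vec{a}^K$,'' which your sketch would otherwise have to resolve by working on $[0;T']$ for $T'$ strictly below the maximal existence time. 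Second, the ``main obstacle'' in your last paragraph is self-inflicted: no estimate uniform in $K$ is needed when passing from mesh times to all of $[0;T]$, because the paper fixes $K$ \emph{first} (so that $C_1'\ds^2<\varepsilon/12$) and only then chooses $\dt$, at which point plain uniform continuity of the fixed, $K$-dependent solution $\underline{\vec{a}}$ on the compact $[0;T]$ suffices. Your proposed repair --- extracting $K$-uniform control of $\mu_l\lapl a_l^{K,(x,y)}$ from sup-norm proximity to $\alpha$ --- would in fact not go through as stated, since the discrete Laplacian amplifies sup-norm errors by $K^2$; fortunately it is not needed.
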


For instance, using {\bfseries (A3)} the system (\ref{eq_ode_re_spatial}) representing the spatial ODE limit for Example~\ref{ex:case-study} can be simplified to
\begin{equation}\label{eq_ode_re_spatial_simp}
\begin{split}
\frac{d}{dt} a^{(x,y)}_1(t) & = \mathds{1}(x,y) \Big( \mu_1 \lapl a^{(x,y)}_1(t) - \lambda a_1^{(x,y)}(t) + c \min \{ a^{(x,y)}_2,1 \} \Big), \\
\frac{d}{dt} a^{(x,y)}_2(t) & = \mathds{1}(x,y) \Big( \mu_2 \lapl a^{(x,y)}_2(t) + \lambda a_1^{(x,y)}(t) - c \min \{ a^{(x,y)}_2,1 \} \Big) . \\
\end{split}
\end{equation}

Then, given that the assumptions of Theorem \ref{thm_scheme} are fulfilled for suitable initial concentrations $\alpha^0_1$ and $\alpha^0_2$, the above ODE system converges, as $K \to \infty$, to the solution of the PDE system
\begin{equation}\label{eq_pde_re}
\begin{split}
\partial_t \alpha_1 & = - \lambda \alpha_1 + c \min \{ \alpha_2, 1 \} + \mu_1 \bigtriangleup \alpha_1 , \\
\partial_t \alpha_2 & = + \lambda \alpha_1 - c \min \{ \alpha_2, 1 \} + \mu_2 \bigtriangleup \alpha_2 . \\
\end{split}
\end{equation}
This implies, as stated by Theorem \ref{thm_main}, that the sequence of normalized CTMCs converges to the solution of the PDE system when $K, N \to \infty$.

\subsection{Reflective Boundary Conditions}

\paragraph*{Illustrating Example} Similarly to absorbing boundary conditions, we first convey our approach on the heat equation. As we will see, the presence of NBCs will ask for a different numerical treatment.

Using similar arguments as in the case of absorbing boundary conditions, we infer that the CTMC sequence $(\frac{1}{N} \vec{A}_N(t))_{t \geq 0}$ underlying Example~\ref{ex:heat} converges in probability, as $N \rightarrow \infty$, to the solution of the system of $(K+1)^2$ ODEs
\begin{equation}\label{eq_first_ode_nbcs}
\begin{split}
\frac{d}{dt} a_1^{(x,y)}(t) = \mu_1 \lapl a_1^{(x,y)}(t), \quad (x,y) \in \mathcal{R}_K ,
\end{split}
\end{equation}
where $a_1^{(x,y)}(0) = \alpha_1^0(x,y)$ for all $(x,y) \in \mathcal{R}_K$. Crucially, $\alpha_1^0(x,y)$ is now a function that satisfies the NBCs (\ref{eq:nbc}).

Further, we observe that the spatial ODE system (\ref{eq_first_ode_nbcs}) can be solved by means of the Euler method using a fixed step size $\dt$. The iterations are given by
\begin{equation}\label{eq_first_euler_nbcs}
\begin{split}
a_1^{(x,y)}(0) & := \alpha_1^0(x,y), \\
a_1^{(x,y)}(m + 1) & := a_1^{(x,y)}(m) + \dt \big( \mu_1 \lapl a_1^{(x,y)}(m) \big) ,
\end{split}
\end{equation}
with $(x,y) \in \mathcal{R}_K$ and $m \geq 0$. As before, the crucial observation is that this very sequence can be interpreted as a finite difference scheme for solving the heat equation~(\ref{eq_first_pde}) which is now subject to the NBCs
\begin{align}\label{eq_first_nbcs}
0 & = \partial_x \alpha_1(1,y,t) &
0 & = \partial_x \alpha_1(0,y,t) &
0 & = \partial_y \alpha_1(x,1,t) &
0 & = \partial_y \alpha_1(x,0,t) ,
\end{align}
where $x,y \in [0;1]$ and $0 \leq t \leq T$. To see this, we discretize the domain $[0;1]^2 \times \mathbb{R}_{\geq0}$ into $\mathcal{R}_K \times \{ m \dt \mid m \geq 0 \}$ and observe that a sufficiently smooth solution of the heat equation (\ref{eq_first_pde}) satisfies
\[
\frac{\alpha_1(x,y,(m + 1)\dt) - \alpha_1(x,y,m\dt)}{\dt} =
\mu_1 \sum_{(\tilde{x},\tilde{y}) \in \mathcal{N}(x,y)} \frac{\alpha_1(\tilde{x},\tilde{y},m\dt) - \alpha_1(x,y,m\dt)}{\ds^2} \\
+ \mathcal{O}(\dt + \ds^2),
\]
for all $(x,y) \in \mathcal{R}_K \setminus \Omega_K$. The treatment of the boundary $\Omega_K$, instead, is done by means of so-called \emph{ghost regions} $(-\ds,j\ds), (1+\ds,\ds), (i\ds,-\ds), (i\ds,1+\ds)$ where $0 \leq i,j \leq K$, cf. \cite[Section 1.4]{thomas95}. These allow one to define the discrete Laplace operator also at $\Omega_K$. More importantly, for a given ghost region $(x_0,y_0)$, the NBCs allow us to express $a^{(x_0,y_0)}_1(m)$ in terms of $\{ a^{(x,y)}_1(m) \mid (x,y) \in \mathcal{R}_K \}$. For instance, the assumptions $\partial_x \alpha_1(0,0,m \dt) = 0$ and $\partial_y \alpha_1(0,0,m \dt) = 0$ yield the approximations
\begin{align*}
\frac{a_1^{(0,0)}(m) - a_1^{(-\ds,0)}(m)}{\ds} & = 0 &  \frac{a_1^{(0,0)}(m) - a_1^{(0,-\ds)}(m)}{\ds} & = 0 ,
\end{align*}
which imply, in turn, that
\begin{equation*}
\begin{split}
a^{(0,0)}_1(m + 1) & = a^{(0,0)}_1(m) + \dt \mu_1 \bigg( \frac{a^{(-\ds,0)}_1(m) - 2a^{(0,0)}_1(m) + a^{(\ds,0)}_1(m)}{\ds^2} + \frac{a^{(0,-\ds)}_1(m) - 2a^{(0,0)}_1(m) + a^{(0,\ds)}_1(m)}{\ds^2} \bigg) \\
& = a^{(0,0)}_1(m) + \dt \bigg( \mu_1 \!\!\!\!\!\! \sum_{(\tilde{x},\tilde{y}) \in \{(\ds,0),(0,\ds)\}} \!\!\!\!\!\!\!\! \frac{a^{(\tilde{x},\tilde{y})}_1(m) - a^{(0,0)}_1(m)}{\ds^2} \bigg) \\
& = a^{(0,0)}_1(m) + \dt \mu_1 \lapl a_1^{(0,0)}.
\end{split}
\end{equation*}
Thus, we recover the exact correspondence with (\ref{eq_first_euler_nbcs}) also at the corner $(0,0)$. As a similar calculation can be shown to apply to all regions in $\Omega_K$, we conclude that the Euler sequence (\ref{eq_first_euler_nbcs}) can be interpreted as a finite difference scheme of the heat equation (\ref{eq_first_pde}) that is subject to NBCs (\ref{eq_first_nbcs}).

\paragraph*{General PDE Limit}
We now consider the ODE limit~(\ref{eq_ode_sys}) for a generic mobile reaction network in $\mathcal{R}_K$. Similarly to the case of absorbing boundary conditions, (\ref{eq_ode_sys}) can be rewritten into
\begin{equation*}
\frac{d}{dt} a^{(x,y)}_l(t) = \mu_l \lapl a_l^{(x,y)}(t) + \mathcal{L}^{(x,y)}_l(\vec{a}(t)) ,
\end{equation*}
where $a^{(x,y)}_l(0) = \alpha_l^0(x,y)$ for $(x,y) \in \mathcal{R}_K$ and $1 \leq l \leq L$. The approximating sequence $(\vec{a}(m))_{m\geq0}$ arising from the Euler method with a fixed $\dt$ to solve (\ref{eq_ode_sys}) is given by
\begin{align}\label{eq_ode_euler_nbcs}
a_l^{(x,y)}(m+1) := a_l^{(x,y)}(m) + \dt \Big( \mu_l \lapl a_l^{(x,y)}(m) + \mathcal{L}^{(x,y)}_l\big(\vec{a}(m)\big) \Big), \qquad m \geq 0 .
\end{align}
Using the concept of ghost regions, this sequence corresponds to a finite difference scheme of the \emph{reaction-diffusion} PDE system (\ref{eq_pde}) that is subject to the NBCs
\begin{align}\label{eq_pde_nbcs}
0 & = \partial_x \alpha_l(1,y,t) &
0 & = \partial_x \alpha_l(0,y,t) &
0 & = \partial_y \alpha_l(x,1,t) &
0 & = \partial_y \alpha_l(x,0,t) ,
\end{align}
for all $1 \leq l \leq L$, $0 \leq t \leq T$, $x \in [0;1]$ and $y \in [0;1]$.

Unfortunately, the proof of Theorem~\ref{thm_scheme} does not carry over if reflective boundary conditions are assumed, meaning that we cannot establish the convergence of the finite difference scheme (\ref{eq_ode_euler_nbcs}). However, we are able to prove that (\ref{eq_ode_euler_nbcs}) is stable and consistent~\cite[Section 2.5 and 6.2]{thomas95} with respect to a certain norm. Informally, this means that the scheme arises from a Taylor approximation (consistency) and gives rise to an error that does not explode (stability).

\begin{theorem}\label{thm_scheme_nbcs}
Assume that for a fixed $T > 0$, the family of functions $(\alpha_1,\ldots,\alpha_L)$ on $[0;1]^2 \times [0;T]$ describes for all $0 < T' \leq T$ the unique solution of (\ref{eq_pde}) subjected to the NBCs (\ref{eq_pde_nbcs}) on $[0;1]^2 \times [0;T']$. Further, assume that
\[ \partial_{txx} \alpha_l, \ \partial_{xxxx} \alpha_l, \ \partial_{yxx} \alpha_l, \ \partial_{tyy} \alpha_l, \ \partial_{xyy} \alpha_l, \ \partial_{yyyy} \alpha_l, \]
where $1 \leq l \leq L$, exist and are continuous. For $M \geq 1$, define $\dt := T / M$, $r_l := \mu_l {\dt}/{\ds^2}$ and assume that $M$ is large enough such that $r_l \leq 1/4$ for all $1 \leq l \leq L$. Then, the Euler method (\ref{eq_ode_euler_nbcs}) interpreted as a finite difference scheme of (\ref{eq_pde}) is stable and consistent with respect to the ``averaging'' norm $\| \vec{a} \|_{1,\ds} :=  \sum_{(x,y) \in \mathcal{R}_K, 1 \leq l \leq L} | a_l^{(x,y)} | \ds^2$.
\end{theorem}

For linear PDEs, that is, if the reactive terms may be written in the form  $f_j(z_1, \ldots, z_L) = \sum_{l = 1}^L \xi_{j,l} z_l$, for $\xi_{j,l} \in \mathds{R}$, Theorem~\ref{thm_scheme_nbcs} and Lax's Theorem~\cite[Section 2.5 and 6.2]{thomas95} assert convergence.
\begin{corollary}\label{cor_scheme_nbcs}
If the PDE system (\ref{eq_pde}) is linear, then it is also well-posed~\cite[Section 6.2]{strikwerda89}. Let us further assume that the unique solution on $[0;1]^2 \times [0;T]$ is such that
\[ \partial_{txx} \alpha_l, \ \partial_{xxxx} \alpha_l, \ \partial_{yxx} \alpha_l, \ \partial_{tyy} \alpha_l, \ \partial_{xyy} \alpha_l, \ \partial_{yyyy} \alpha_l, \]
exist and are continuous for all $1 \leq l \leq L$.

Then, there exist $K_0, M_0 \geq 0$ and $C > 0$ such that the Euler method (\ref{eq_ode_euler_nbcs}) interpreted as a finite difference scheme of (\ref{eq_pde}) subjected to the NBCs (\ref{eq_pde_nbcs}) satisfies
\[ \max_{0 \leq m \leq M} \frac{1}{K^2} \sum_{\substack{(x,y) \in \mathcal{R}_K \\ 1 \leq l \leq L}} | \alpha_l(x,y,m\dt) - a^{(x,y)}_l(m) | \leq C(\ds + \dt)  \]
for all $K \geq K_0$ and $M \geq M_0$, where $\dt := T / M$.
\end{corollary}

Theorem \ref{thm_kurtz} and Corollary \ref{cor_scheme_nbcs} can now be used to show that the CTMC sequence converges in probability to the solution of the PDE system (\ref{eq_pde}) subjected to the NBCs (\ref{eq_pde_nbcs}) on $[0;1]^2 \times [0;T]$. Note that $\| \cdot \|_{1,\ds} \leq 4 L \| \cdot \|_{\infty}$, meaning that convergence in the supremum norm implies convergence in the averaging norm.

\begin{theorem}[Spatial PDE Limit]\label{thm_main_nbcs}
If the PDE system (\ref{eq_pde}) is linear, then it is also well-posed~\cite[Section 6.2]{strikwerda89} and the finite difference scheme (\ref{eq_ode_euler_nbcs}) satisfies
\[
\lim_{K \rightarrow \infty} \lim_{N \rightarrow \infty} \mathbb{P}  \Bigg\{ \sup_{0 \leq t \leq T} \frac{1}{K^2} \sum_{\substack{(x,y) \in \mathcal{R}_K \\ 1 \leq l \leq L}} \Big| \frac{1}{N}A_l^{(x,y)}(t) - \alpha_l(x,y,t) \Big|  > \varepsilon \Bigg\} = 0
\]
provided that $\partial_{txx} \alpha_l, \ \partial_{xxxx} \alpha_l, \ \partial_{yxx} \alpha_l, \ \partial_{tyy} \alpha_l, \ \partial_{xyy} \alpha_l, \ \partial_{yyyy} \alpha_l$, where $1 \leq l \leq L$, exist and are continuous on $[0;1]^2 \times [0;T]$.
%
\end{theorem}

\section{Case Study}\label{sec:validation}

The purpose of this section is to show how to use our PDE limit result for the performance modeling and analysis of a communication network with mobile nodes. In the following, we study in detail the mobile version of Example~\ref{ex:case-study} with its PDE limit (\ref{eq_pde_re}) using absorbing boundary conditions.

\paragraph*{System Description}
Our mobile network consists of a lattice of $(K+1)^2$ regions where each region represents an area offering Internet connectivity to all nodes therein located by means of an 802.11 access point. We assume that nodes feature an \emph{on}/\emph{off} behavior whereby they interpose an exponentially distributed think time between successive connections. Whilst pausing, nodes may decide to move across one of the neighboring regions according to an unbiased RW. When connected, a node downloads a file with an exponentially distributed size, and does not move until the download has finished. We assume that the overall network's bottleneck is represented by the link between the node and the access point, therefore the performance depends on the current load at each region and on how the capacity is shared amongst the nodes in a region.

This system was implemented in the JiST framework, using the SWANS module for the modeling of wireless and ad-hoc networks~\cite{Barr:2005:JEA:1060168.1060170}. The physical layer was configured using default values for all parameters. For simplicity, we used a model of transmission with no packet loss. Because of the scenario considered in this study, nodes were not equipped with a dynamically updating routing protocol because there was no direct communication between nodes, and the destination of a packet could be determined based on the node location, by maintaining a simple mapping between the current location and the MAC address of the access point. The application layer was implemented as a lightweight file transfer protocol on top of UDP\@. The file size to download was exponentially distributed with mean 40\,KB, and the file was transmitted in 2\,KB packets. We assume that the think time and the RW behavior during the \emph{off} state of the nodes are given by independent distributions. Let $\lambda$ be the think rate and $\mu K^2 $ be the movement rate to each neighboring region. Thus, the holding time in the \emph{off} state is exponentially distributed with mean $1/(\lambda+ 4 \mu K^2 )$.

\begin{table*}[t]
\centering
\begin{tabular}{rrrrrrrr}
\toprule
 & & \multicolumn{3}{c}{$\mu = 0.001$} & \multicolumn{3}{c}{$\mu = 0.010$} \\
\cmidrule(l){3-5} \cmidrule(l){6-8}
\multicolumn{1}{c}{$V$}  & \multicolumn{1}{c}{\emph{Nodes}} & \multicolumn{1}{c}{\emph{JiST}} & \multicolumn{1}{c}{\emph{CTMC \ (Error)}} & \multicolumn{1}{c}{\emph{PDE \ (Error)}} &   \multicolumn{1}{c}{\emph{JiST}} & \multicolumn{1}{c}{\emph{CTMC \ (Error)}} & \multicolumn{1}{c}{\emph{PDE \ (Error)}} \\
\midrule
10  & 56 & 0.8852 & 0.8638  (0.0214) & 0.9007 (0.0155) & 0.5161 & 0.5368 (0.0207) & 0.2716  (0.2445) \\
25 & 156 & 0.6968 & 0.6939 (0.0029) & 0.7186 (0.0218) & 0.5207 & 0.5480  (0.0273) & 0.3807 (0.1400) \\
50  &320  & 0.4734 & 0.4578 (0.0156) & 0.4646 (0.0088) & 0.3926 & 0.3742 (0.0184) & 0.2840 (0.1086) \\
\bottomrule
\end{tabular}
\caption{Validation results: Average fraction of nodes in the \emph{off} state after 10 time units in an area divided in 64 regions ($K = 7$), for $N = 1$. The errors are given as the absolute difference with respect to the JiST estimated average.}\label{tab:validation}
\end{table*}

\begin{table*}[t]
\centering
\begin{tabular}{rrrccrccrcc}
\toprule
& &   \multicolumn{3}{c}{$V = 10$} & \multicolumn{3}{c}{$V = 25$} & \multicolumn{3}{c}{$V = 50$} \\
\cmidrule(l){3-5} \cmidrule(l){6-8} \cmidrule(l){9-11}
$N$ & $K$ & \multicolumn{1}{c}{\emph{Nodes}}  & \multicolumn{1}{c}{\emph{CTMC}} &
\multicolumn{1}{c}{\emph{Err. PDE}} & \multicolumn{1}{c}{\emph{Nodes}}  & \multicolumn{1}{c}{\emph{CTMC}} &
\multicolumn{1}{c}{\emph{Err. PDE}}  & \multicolumn{1}{c}{\emph{Nodes}} &
\multicolumn{1}{c}{\emph{CTMC}} & \emph{Err. PDE} \\
\midrule
1 & 7  & 56 & 0.5368   & 0.2652 & 156 & 0.5480 &  0.1673  & 320 & 0.3742 & 0.0902 \\
2 & 7 & 124 & 0.5073 &  	0.2350 &  320 & 0.5500 & 0.1693 & 660 & 0.4172 & 0.1332 \\
\midrule
1 & 15 & 268  & 0.3897 & 0.1181 & 716 & 0.4611 & 0.0804 & 1496 & 0.3397 &  0.0557  \\
2 & 15 & 546  & 0.3897 & 0.1181 & 1496	& 0.4658 & 0.0851 & 3028 & 0.3415 &  0.0575 \\
\midrule
1 & 31 & 1132 & 0.3424  & 0.0708 & 3100 & 0.4139 &0.0332   & 6338 & 0.3102 &  0.0262  \\
2 & 31 & 2248 & 0.3341  & 0.0625 & 6388 &  0.4256 & 0.0449  & 12948 &	0.3061 &  0.0221 \\
\midrule
1 & 47 & 2592 &  0.3238 & 0.0522 &   7108 & 0.3984 &0.0177 & 14628 & 0.3035 & 0.0195 \\
2 & 47 &	5572	& 0.3096 & 0.0380 & 14628 &  0.4046 &0.0239 & 29736& 0.3011 & 0.0171  \\
\midrule
1 &  63 & 4676 &  0.3049  & 0.0335 & 12736 & 0.3893 & 0.0009 & 26296& 0.2928 & 0.0088 \\
2 &  63 & 10084   &  0.3016   & 0.0300  & 29296 &  0.3943 & 0.0136 & 53480 & 0.2973 & 0.0133 \\
\midrule
1 & 79 & 7268 & 0.3072 & 0.0356 & 19996  & 0.3813 & 0.0006 & 41392  & 0.2907 & 0.0067 \\
2 & 79 & 	15736 & 0.2943 & 0.0227 & 41312 	    & 0.3877 & 0.0070 & 84084 & 0.2910 & 0.0070  \\
\bottomrule
\end{tabular}
\caption{Scaling behavior for $\mu = 0.010$. The PDE approximation error is the absolute difference with respect to the CTMC estimate.}\label{tab:scaling}
\end{table*}

\paragraph*{Stochastic Model} To model this system, we use the mobile reaction network with interaction functions given in Example~\ref{ex:case-study}, where we let $\mu_1^K = \mu K^2$, for some $\mu > 0$ which will be varied in our numerical experiments, and $\mu_2^K = 0$. The scaling of $\mu^K_1$ satisfies assumption \textbf{(A3)}, while the choice of $\mu_2^K$ models the fact that the nodes stand still until the download has finished. With this, we wish to highlight that the unbiased RW can indeed be made dependent upon the local state of the node; analogous results to those presented in this section would be obtained with $\mu_2^K > 0$, i.e., allowing the node to transfer to another region even during the download process. By varying $\mu$ in our experiments, instead, we wish to show the impact of the mobility model on the network performance.

\paragraph*{Remarks} Let us observe that, in essence, our stochastic model is a \emph{spatial} version of a queueing network with exponentially distributed service times. In each region $(x,y)$, $A_1^{(x,y)}$ represents the queue length at a delay station whereas $A_2^{(x,y)}$ is the queue length at a multi-server station with $N$ servers with individual rate $c$. A job serviced at this station goes to the delay station in the same region with probability 1, whereas a job at the delay station may go into service in the same region, or into a neighboring delay station according to the movement rate $\mu^1_K$. The knowledge of the queue lengths in each region, given by the PDE solution, therefore provides a complete characterization of the system's performance.

Due to the assumption on the boundary conditions and to the forms of $F_1$ and $F_2$, all nodes will eventually leave the spatial domain with probability 1. Although we make use of this scenario in the present paper, we wish to remark that our framework also supports exogenous arrivals. For this, however, the arrival rate must scale linearly with $N$. For instance, interaction 1 in Example~\ref{ex:bit-torrent}  describes the arrival of $A_1$-type nodes into each region with rate $N \lambda$. This models Poisson arrivals with intensity proportional to some parameter function $\lambda(x,y)$ that must be consistent with the DBCs.

\paragraph*{Model Validation} We tested the validity of our stochastic model against the reference  behavior as given by discrete-event simulation of the JiST implementation. The performance metric of interest was chosen to be the ratio between the expected total number of nodes in state $A_1$ and the initial population of nodes after 10 time units. The CTMC was solved by simulation, since closed-form expressions for the transient behavior are not available and numerical CTMC solution was made infeasible by the excessively large state spaces. In both cases, simulations were conducted using the method of independent replicas; the stopping criterion was the convergence of the 95\% confidence interval within 5\% of the mean. Throughout the remainder of this section, we report such means. Finally, the PDEs were solved using Matlab and its \emph{Partial Differential Equation Toolbox} with the function \texttt{parabolic}, which implements the numerical solution of systems of reaction-diffusion PDE with both NBCs and DBCs.

The JiST implementation and the CTMC model were parametrized as follows. Throughout all tests, we kept $\lambda$ fixed at 0.250. We estimated the parameter $c = 2.857$ by measuring the download times in a network with a single node, that is, in the absence of contention. The initial population of nodes across the spatial domain was generated according to the function $\theta$ given in (\ref{eq_theta}). The initial concentrations of nodes were set as
\[
\alpha_1^0(x,y) = 0 \text{\qquad and \qquad} \alpha_2^0 = V \theta(x,y),
\]
where $V$ is a parameter that was varied in our tests to observe the network under different node densities. Thus, the $N$-th element of the CTMC sequence had an initial number of downloading nodes in region $(x,y)$ equal to $\lfloor N V \theta(x,y) \rfloor$.

\paragraph*{Results} Table~\ref{tab:validation} shows the results of the comparison between the average fraction of nodes in the \emph{off} state after 10 time units as computed by the JiST discrete-event simulation, CTMC simulation, and PDE analysis, for different values of $V$ and $\mu$, and for $N = 1$ and $K = 7$ (corresponding to an area with 64 regions). As can be expected, our parametrizations exercise the network under a range of different operating regimes. This can be noticed by the fact that our performance index varies sensibly with $V$ in Table~\ref{tab:validation}.  For any fixed $V$, the estimates for the different values of $\mu$ numerically show the impact of the mobility model on the network performance. We remark that the observation window of interest, 10 time units, is in all cases sufficient to study the network under interesting conditions where the fraction of nodes in the \emph{off} state is significantly different than~0 (which corresponds to the initial condition).

The table shows good agreement, in all cases, between the JiST estimates and the CTMC model, according to the notion of error defined as the absolute difference with respect to the JiST average.  The PDE approximation, instead, yielded different quality depending on the choice of $\mu$. Indeed, it already provides good accuracy for $\mu = 0.001$, whereas it suffered more sever errors for $\mu = 0.010$. This is, however, unsurprising because the PDE limit holds for large $N$ and $K$.
To numerically show convergence to the PDE limit, we considered all cases of $V$ with  $\mu = 0.010$ and applied the space scaling by simulating the CTMCs for different values of $K$ until 79, thus dividing the space in at most 6400 regions. The numerical results are presented in Table~\ref{tab:scaling}, where the PDE approximation error is measured as the absolute difference with respect to the corresponding average estimate computed by CTMC simulation. It is interesting to note that, for a fixed $K$, the PDE approximation error may tend to increase with $N$, see, for instance, $K=31$ for $V=25$. However, this is not in contradiction with our limit result which holds when both $N$ and $K$ are large. Indeed, in all cases, increasing $K$ leads to sensibly increased accuracy, with errors that are negligible for all practical purposes for $K=79$, corresponding to a situation with at most ca 84,000 nodes in the network. There, we also verified that the means for $N=1$ and $N=2$ were statistically equal at the 0.05 level.

\begin{figure}[t]
\centering
\includegraphics[scale=0.75]{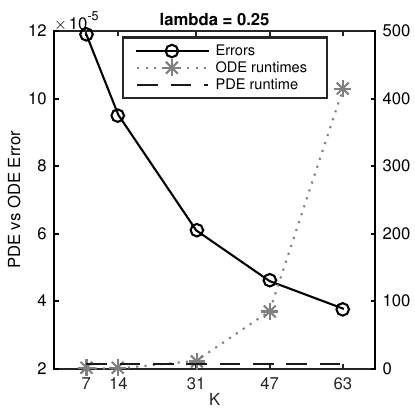}
\hfill
\includegraphics[scale=0.75]{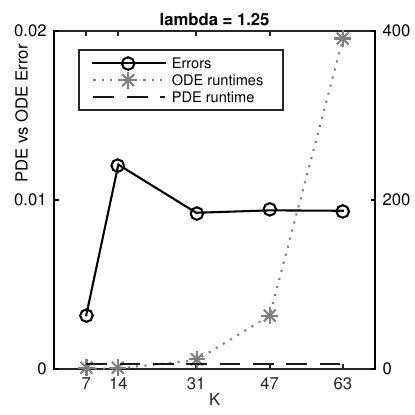}
\hfill
\includegraphics[scale=0.75]{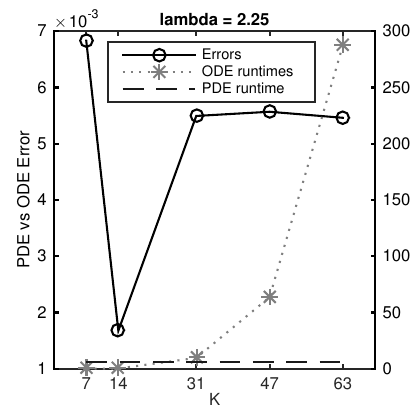}
\caption{Comparison of ODE/PDE estimation errors and runtimes for different values of $K$ and $\lambda$ (0.25, left plot, 1.25 middle plot, and 2.25, right plot; the left $y$ axis shows the absolute error between the ODE solution and the PDE solution of the fraction of agents in the \emph{off} state after 10.0 time units; the right $y$ axis shows the runtimes (in seconds) of the ODE solution with varying $K$ relative to the PDE runtime (dashed line).}\label{fig:odevspde}
\end{figure}
\paragraph*{ODE/PDE Comparison} As discussed, the main advantage in using the limit PDE instead of the limit ODE system is that the mesh discretization becomes a parameter of the solver, instead of being a parameter of the model (i.e., the actual number of regions in the spatial domain). An ODE/PDE comparison is proposed in Figure~\ref{fig:odevspde}, where we consider the absolute error between the the ODE solution and the PDE solution of the fraction of agents in the \emph{off} state after 10.0 time units, together with their runtimes. This was done by fixing $V = 10$ and varying $\lambda = 0.25, 1.25, 2.25$.  Using Matlab's PDE solver default settings, in all cases considered in this paper the PDEs were analyzed within 8\,s on an ordinary machine (consistently with~\cite{valuetools14mtmt}, where a similar analysis was conducted on epidemiological models). This turned out to be significantly cheaper from a computational point of view than directly solving the system of $2(K+1)^2$ ODEs  (with Matlab's \texttt{ode15s} function). In particular,  for $K = 63$ we registered ODE runtimes two orders of magnitude larger than the PDE solution times, whilst yielding accurate results in all cases (up to at most ca 1\%). Clearly, instead, CTMC simulations required a sheer amount of computational power --- for instance, we did measure runtimes of ca 32 hours for the CTMC simulation for the cases $N=2$ and $K=79$, with independent replicas being run in parallel on eight cores.

\section{Conclusion}\label{sec:conclusion}

In this paper we have proposed an approach to analyze stochastic models of mobile networks by means of a deterministic approximation represented by a system of partial differential equations. These offer a macroscopic, continuous view of the network dynamics which holds in the limit of infinite populations of nodes in the network and infinite number of regions in a regular two-dimensional spatial lattice. In practice, in our numerical tests we observed good accuracy in many cases even with few nodes and coarse lattices, by carrying out a validation study against Markov-chain simulations and detailed discrete-event simulations at the network-protocol level. Furthermore, we found good speed of convergence with increasing system sizes, yielding high accuracy for networks consisting of thousands of nodes. These observations allow us to conclude that partial differential equations can be effectively used for the analysis of large-scale mobile systems.

\section*{Acknowledgement}
This work was partially supported by the EU project QUANTICOL, 600708.

\bibliographystyle{IEEEtran}
\bibliography{max,mirco}

\appendix



\section{Proofs}

\begin{proof}[Proof of Theorem \ref{thm_kurtz}]
Let us denote by $E_N$ the state space of $\big(\frac{1}{N } \vec{A}_N(t)\big)_{t\geq0}$ subjected to $\frac{1}{N } A^{(x,y)}_l(0) = \lfloor N \alpha_l^0(x,y) \rfloor / N$. For an $\underline{\vec{a}} \in E_N$, let $\lambda_N(\underline{\vec{a}})$ denote the waiting time distribution in state $\underline{\vec{a}}$, meaning that
\[
\mathbb{P} \Big\{ \tau({\underline{\vec{a}}}) > t \Big| \frac{1}{N } \vec{A}_N(0) = \underline{\vec{a}} \Big\} = e^{- \lambda_N(\underline{\vec{a}})t }, \quad \text{where} \quad
\tau({\underline{\vec{a}}}) := \inf\Big\{ t \geq 0 \Big| \frac{1}{N } \vec{A}_N(t) \neq \underline{\vec{a}} \Big\} .
\]
For every set $\Gamma$ in the Borel $\sigma$-algebra of $E_N$ we  further define the jump distribution in $\underline{\vec{a}}$ as
\[ \mu_N(\underline{\vec{a}}, \Gamma) := \mathbb{P} \Big\{ \frac{1}{N } \vec{A}_N(\tau({\underline{\vec{a}}})) \in \Gamma \Big| \frac{1}{N } \vec{A}_N(0) = \underline{\vec{a}} \Big\} . \]
Then, the drift $D_N(\underline{\vec{a}})$ of $\big(\frac{1}{N } \vec{A}_N(t)\big)_{t\geq0}$ in the state $\underline{\vec{a}} \in E_N$ can be written in the form
\begin{equation*}
\begin{split}
D_N (\underline{\vec{a}}) & = \lambda_N(\underline{\vec{a}}) \int_{\underline{\vec{b}} \in E_N} (\underline{\vec{b}} - \underline{\vec{a}}) \mu_N(\underline{\vec{a}}, d\underline{\vec{b}})
= \int_{\underline{\vec{b}} \in E_N} (\underline{\vec{b}} - \underline{\vec{a}}) \lambda_N(\underline{\vec{a}}) \mu_N(\underline{\vec{a}}, d\underline{\vec{b}}) \\
& = \sum_{(x,y)} \bigg[ \sum_{l,(\tilde{x},\tilde{y})} \frac{w_l^{(x,y),(\tilde{x},\tilde{y})}}{N }\mu_l^K \big(N  \underline{a}_l^{(x,y)}\big) + \sum_j \frac{w_j^{(x,y)}}{N } \cdot
F_j\big(N \underline{a}_1^{(x,y)}, \ldots, N \underline{a}_L^{(x,y)}, \beta_1(x,y), \ldots, \beta_P(x,y) \big)  \bigg] ,
\end{split}
\end{equation*}
where $w_j^{(x,y)}$ and $w_l^{(x,y),(\tilde{x},\tilde{y})}$ denote the corresponding jump vectors of $(\vec{A}_N(t))_{t\geq0}$. For instance, $w_l^{(\ds,\ds),(2\ds,\ds)}$ decrements $A^{(\ds,\ds)}_l$ and increments $A^{(2\ds,\ds)}_l$, whereas $w_l^{(\ds,\ds),(0,\ds)}$ just decrements $A^{(\ds,\ds)}_l$. Also, $w_j^{(x,y)} = 0 = w_l^{(x,y),(\tilde{x},\tilde{y})}$ whenever $(x,y) \in \Omega$. Note that
\[
\Bigg( \sum_{(x,y)} \bigg[ \sum_{l,(\tilde{x},\tilde{y})} w_l^{(x,y),(\tilde{x},\tilde{y})} \mu_l^K \underline{a}_l^{(x,y)} + \sum_j w_j^{(x,y)} f_j\big(\underline{a}_1^{(x,y)}, \ldots, \underline{a}_L^{(x,y)}, \beta_1(x,y), \ldots, \beta_P(x,y)\big) \bigg] \Bigg)^{(x,y)}_l = \mathds{1}(x,y) \cdot \Big( \mathcal{M}^{(x,y)}_l(\underline{\vec{a}}) + \mathcal{L}^{(x,y)}_l(\underline{\vec{a}}) \Big)
\]
for all $\underline{\vec{a}} \in E_N$, $1 \leq l \leq L$ and $(x,y) \in \mathcal{R}$. Moreover, since the theorem of Picard-Lindel\"{o}f and \textbf{(A2)} assert that (\ref{eq_ode_sys}) has a unique solution $\vec{a}$ in $\mathbb{R}^{\mathcal{R} \times \{1, \ldots, L\}}$ and the time domain of $\vec{a}$ contains $[0;T]$ by assumption, it holds that $c := \sup_{0 \leq t \leq T} \| \vec{a}(t) \| + 1 < \infty$. Thanks to the compactness of $E := [-c;c]^{\mathcal{R} \times \{1, \ldots, L\}}$ and $\textbf{(A1)}$, the above implies
\begin{equation*}
\lim_{N \to \infty} \sup_{\underline{\vec{a}} \in \Gamma_N}
\bigg\| D_N(\underline{\vec{a}}) - \sum_{(x,y)} \bigg[ \sum_{l,(\tilde{x},\tilde{y})} w_l^{(x,y),(\tilde{x},\tilde{y})} \mu_l^K \underline{a}_l^{(x,y)} + \sum_j w_j^{(x,y)} f_j\big(\underline{a}_1^{(x,y)}, \ldots, \underline{a}_L^{(x,y)}, \beta_1(x,y), \ldots, \beta_P(x,y)\big) \bigg] \bigg\| = 0 ,
\end{equation*}
where $\Gamma_N := E_N \cap E$. Moreover, it holds that
\[
E_N \cap \Big\{ \vec{b} \in \mathbb{R}^{\mathcal{R} \times \{1, \ldots, L\}} \ \big\vert \ \inf_{0 \leq t \leq T} \| \vec{b} - \vec{a}(t) \| \leq \frac{1}{2} \Big\} \subseteq E
\qquad \text{and} \qquad
\sup_{N \geq 1} \sup_{ \underline{\vec{a}} \in \Gamma_N } \lambda_N(\underline{\vec{a}}) \int_{\underline{\vec{b}} \in E_N} \| \underline{\vec{b}} - \underline{\vec{a}} \| \mu_N(\underline{\vec{a}}, d\underline{\vec{b}}) < \infty .
\]
To see this, we note that
\begin{align*}
\lambda_N(\underline{\vec{a}}) \int_{\underline{\vec{b}} \in E_N}  \| \underline{\vec{b}} - \underline{\vec{a}} \|  \mu_N(\underline{\vec{a}}, d\underline{\vec{b}}) & \leq \sqrt{(K+1)^2L} \int_{\underline{\vec{b}} \in E_N} \| \underline{b} - \underline{a} \|_\infty \lambda_N(\underline{\vec{a}}) \mu_N(\underline{\vec{a}}, d\underline{\vec{b}}) \\
& \leq (K+1)\sqrt{L} \sum_{(x,y)} \bigg[ \sum_j \frac{C}{N} F_j\big(N \underline{a}_1^{(x,y)}, \ldots, N \underline{a}_L^{(x,y)}, \vec{\beta}(x,y)\big)
+ \sum_l \frac{1}{N} 4 \mu_l^K N \underline{a}_l^{(x,y)} \bigg] ,
\end{align*}
where $C := \max\big\{ | d_{jl} - c_{jl} | \mid 1 \leq j \leq J \land 1 \leq l \leq L \big\}$. Moreover, for all $N \geq 1$, it holds that
\[
\sup_{ \underline{\vec{a}} \in \Gamma_N } \lambda_N(\underline{\vec{a}}) \int_{ \| \underline{\vec{b}} - \underline{\vec{a}} \| > \varepsilon_N } \| \underline{\vec{b}} - \underline{\vec{a}} \| \mu_N(\underline{\vec{a}}, d\underline{\vec{b}}) = 0, \qquad \text{where} \quad \varepsilon_N := \frac{(C+1)(K+1)\sqrt{L}}{N},
\]
because $\varepsilon_N < \| \underline{\vec{b}} - \underline{\vec{a}} \| \leq \sqrt{(K+1)^2 L} \| \underline{\vec{b}} - \underline{\vec{a}} \|_\infty$ implies
$\frac{C+1}{N} < \| \underline{\vec{b}} - \underline{\vec{a}} \|_\infty $. This allows us to conclude the claim using Theorem 2.11 in \cite{kurtz-1970}.
\end{proof}


\begin{proof}[Proof of Theorem \ref{thm_kurtz_nbcs}]
Similarly to the proof of Theorem~\ref{thm_kurtz}.
\end{proof}


\begin{proof}[Proof of Proposition \ref{prop_scaling}]
Using the concept of characteristic function, it is possible to establish an analytic expression of the Fourier-Laplace transform of the transient probability distribution of a continuous-time random walk (CTRW) in the case of general holding times. This, in turn, characterizes $\mathbb{E}(d_k^2(t))$ in terms of the underlying Laplace transform, see \cite[Section 2.5]{WeissRandomWalk}. However, in the case of exponentially distributed holding times, it is possible to provide a direct proof that relies on the uniformization method for CTMCs with countable state spaces \cite[Chapter 8, Section 2.1]{bremaud}. For this, let the CTRW $(W_k(t))_{t\geq0}$ be given by means of the standard graphical notation in Figure \ref{fig_CTRW} and let $(\widehat{W}_k(n))_{n\geq0}$ denote the unbiased random walk in \emph{discrete} time on $\frac{1}{k} \mathbb{Z} \times \frac{1}{k} \mathbb{Z}$ with $\widehat{W}_k(0) = (0,0)$. Further, let $(N(t))_{t \geq 0}$ be the homogenous Poisson process with intensity $r_k$ and $P$ denote the transition matrix of $(\widehat{W}_k(n))_{n\geq0}$. Then it holds that $W_k(t) = \widehat{W}_k(N(t))$ and
\begin{equation*}
\begin{split}
\mathbb{E}(d_k^2(t)) & = \sum_{(x,y) \in \mathbb{Z}^2} \mathbb{P}\Big\{W_k(t) = \Big(\frac{x}{k},\frac{y}{k}\Big)\Big\} \Big[ \Big(\frac{x}{k}\Big)^2 + \Big(\frac{y}{k}\Big)^2 \Big]
= \frac{1}{k^2} \sum_{(x,y) \in \mathbb{Z}^2} \bigg( \sum_{n=0}^{\infty} \vec{e}_{\left(\frac{0}{k},\frac{0}{k}\right)}^{\ T} P^n \frac{(r_k t)^n}{n!} e^{-r_k t} \bigg) \vec{e}_{\left(\frac{x}{k},\frac{y}{k}\right)} \big( x^2 + y^2 \big) \\
& = \frac{1}{k^2} e^{-r_k t} \sum_{n=0}^{\infty} \frac{(r_k t)^n}{n!} \bigg( \sum_{(x,y) \in \mathbb{Z}^2} \vec{e}_{\left(\frac{0}{k},\frac{0}{k}\right)}^{\ T} P^n \vec{e}_{\left(\frac{x}{k},\frac{y}{k}\right)} \big( x^2 + y^2 \big) \bigg)
= \frac{1}{k^2} e^{-r_k t} \sum_{n=0}^{\infty} \frac{(r_k t)^n}{n!} \mathbb{E}\big( \| \widehat{W}_k(n) \|^2\big) \\
& = \frac{1}{k^2} e^{-r_k t} \sum_{n=0}^{\infty} \frac{(r_k t)^n}{n!} n = \frac{1}{k^2} e^{-r_k t} r_k t \sum_{n=1}^{\infty} \frac{(r_k t)^{n-1}}{(n-1)!} = \left(\frac{r_k}{k^2}\right) t .
\end{split}
\end{equation*}
\end{proof}

\begin{figure}
\centering
    \begin{tikzpicture}[->,>=stealth',shorten >=1pt,auto,semithick]
    \matrix[row sep=0.8cm, column sep=0.6cm]{

    \node (1) {$$}; \pgfmatrixnextcell \node (2) {$\ldots$}; \pgfmatrixnextcell \node (3) {$\ldots$}; \pgfmatrixnextcell \node (4) {$\ldots$}; \pgfmatrixnextcell \node (5) {$$}; \\
    \node (6) {$\ldots$}; \pgfmatrixnextcell \node (7) {$(\frac{-1}{k},\frac{1}{k})$}; \pgfmatrixnextcell \node (8) {$(\frac{0}{k},\frac{1}{k})$}; \pgfmatrixnextcell \node (9) {$(\frac{1}{k},\frac{1}{k})$}; \pgfmatrixnextcell \node (10) {$\ldots$}; \\
    \node (11) {$\ldots$}; \pgfmatrixnextcell \node (12) {$(\frac{-1}{k},\frac{0}{k})$}; \pgfmatrixnextcell \node (13) {$(\frac{0}{k},\frac{0}{k})$}; \pgfmatrixnextcell \node (14) {$(\frac{1}{k},\frac{0}{k})$}; \pgfmatrixnextcell \node (15) {$\ldots$}; \\
    \node (16) {$\ldots$}; \pgfmatrixnextcell \node (17) {$(\frac{-1}{k},\frac{-1}{k})$}; \pgfmatrixnextcell \node (18) {$(\frac{0}{k},\frac{-1}{k})$}; \pgfmatrixnextcell \node (19) {$(\frac{1}{k},\frac{-1}{k})$}; \pgfmatrixnextcell \node (20) {$\ldots$}; \\
    \node (21) {$$}; \pgfmatrixnextcell \node (22) {$\ldots$}; \pgfmatrixnextcell \node (23) {$\ldots$}; \pgfmatrixnextcell \node (24) {$\ldots$}; \pgfmatrixnextcell \node (25) {$$}; \\
    };

    \path
            (6) [bend left] edge node {$$} (7)
            (7) edge node {$$} (6)
            (7) edge node {$$} (8)
            (8) edge node {$$} (7)
            (8) edge node {$$} (9)
            (9) edge node {$$} (8)
            (9) edge node {$$} (10)
            (10) edge node {$$} (9)
            (11) edge node {$$} (12)
            (12) edge node {$$} (11)
            (12) edge node {$$} (13)
            (13) edge node {$$} (12)
            (13) edge node {$$} (14)
            (14) edge node {$$} (13)
            (14) edge node {$$} (15)
            (15) edge node {$$} (14)
            (16) edge node {$$} (17)
            (17) edge node {$$} (16)
            (17) edge node {$$} (18)
            (18) edge node {$$} (17)
            (18) edge node {$$} (19)
            (19) edge node {$$} (18)
            (19) edge node {$$} (20)
            (20) edge node {$$} (19)
            (2) [bend left] edge node {$$} (7)
            (7) edge node {$$} (2)
            (7) edge node {$$} (12)
            (12) edge node {$$} (7)
            (12) edge node {$$} (17)
            (17) edge node {$$} (12)
            (17) edge node {$$} (22)
            (22) edge node {$$} (17)
            (3) edge node {$$} (8)
            (8) edge node {$$} (3)
            (8) edge node {$$} (13)
            (13) edge node {$$} (8)
            (13) edge node {$$} (18)
            (18) edge node {$$} (13)
            (18) edge node {$$} (23)
            (23) edge node {$$} (18)
            (4) edge node {$$} (9)
            (9) edge node {$$} (4)
            (9) edge node {$$} (14)
            (14) edge node {$$} (9)
            (14) edge node {$$} (19)
            (19) edge node {$$} (14)
            (19) edge node {$$} (24)
            (24) edge node {$$} (19);
    \end{tikzpicture}
\caption{Continuous-time random walk $(W_k(t))_{t\geq0}$ on the lattice $\frac{1}{k} \mathbb{Z} \times \frac{1}{k} \mathbb{Z}$. The rate of each transition is $r_k/4$ and was suppressed to enhance readability. In particular, the average sojourn time in each state is $1/r_k$.}\label{fig_CTRW}
\end{figure}
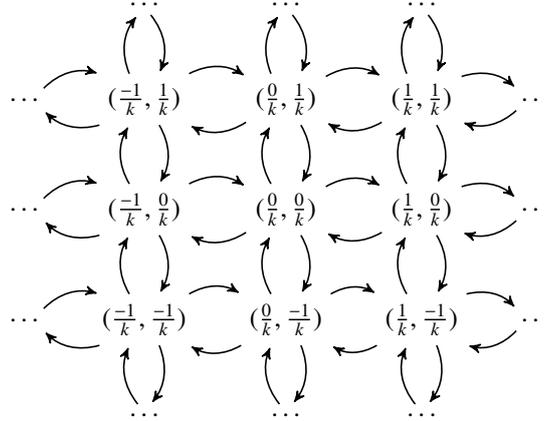

\begin{proposition}\label{prop_loc_glob_lip}
Let us fix an $f : \mathbb{R}^n \rightarrow \mathbb{R}$ which is locally Lipschitz and some $c > 0$. Then, the function
\begin{equation*}
\begin{split}
& \underline{f}(x) :=
\begin{cases}
f(x) & \ , \| x \| \leq c \\
f\big(c \frac{x}{\| x \|}\big) & \ , \text{otherwise}
\end{cases}
\end{split}
\end{equation*}
is globally Lipschitz on $\mathbb{R}^n$.
\end{proposition}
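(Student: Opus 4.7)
The plan is to decompose $\underline{f}$ as a composition $\underline{f} = f \circ g$, where $g : \mathbb{R}^n \to \overline{B}(0,c)$ is the radial retraction onto the closed ball of radius $c$, defined by $g(x) = x$ for $\|x\| \leq c$ and $g(x) = cx/\|x\|$ otherwise. Observe that $\underline{f}$ depends only on the values of $f$ on the compact ball $\overline{B}(0,c)$, so it suffices to show two facts: (i) $f$ restricted to $\overline{B}(0,c)$ is globally Lipschitz with some constant $L$, and (ii) $g$ is $1$-Lipschitz on $\mathbb{R}^n$. Once these are established, $\underline{f}$ is $L$-Lipschitz on all of $\mathbb{R}^n$ as a composition of Lipschitz maps.

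For (i), I would use a standard compactness argument. Since $f$ is locally Lipschitz, every point $x \in \overline{B}(0,c)$ admits an open neighborhood $U_x$ on which $f$ is Lipschitz with some constant $L_x$. Extract a finite subcover $\{U_{x_1}, \ldots, U_{x_m}\}$ of $\overline{B}(0,c)$, set $L_0 := \max_i L_{x_i}$, and let $\delta > 0$ be a Lebesgue number for this cover. Given any $x, y \in \overline{B}(0,c)$, exploit convexity by partitioning the straight segment $[x,y]$ into $n \leq \lceil \|x-y\|/\delta\rceil + 1$ pieces of length at most $\delta$, so that each consecutive pair of nodes lies in one of the $U_{x_i}$. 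Telescoping the triangle inequality yields $|f(x)-f(y)| \leq L_0 \|x-y\|$, giving the desired $L := L_0$.

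For (ii), the map $g$ is the metric projection onto the closed convex set $\overline{B}(0,c)$, which is classically known to be $1$-Lipschitz in any Hilbert space; I would prove this directly by case analysis on whether $\|x\|, \|y\|$ lie inside or outside the ball. The case $\|x\|, \|y\| \leq c$ is trivial, and the case $\|x\|, \|y\| > c$ follows from the algebraic identity $\|cx/\|x\| - cy/\|y\|\|^2 = 2c^2(1 - \langle x,y\rangle/(\|x\|\|y\|))$ combined with $\|x-y\|^2 \geq 2(\|x\|\|y\| - \langle x,y\rangle) \geq 2c^2(1 - \langle x,y\rangle/(\|x\|\|y\|))$, using $\|x\|\|y\| \geq c^2$. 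The mixed case, say $\|x\| \leq c < \|y\|$, reduces after expansion to the inequality $\|y\| + c - 2\|x\| \geq 0$, which holds since $\|y\| > c \geq \|x\|$.

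The main obstacle is really the second step: the $1$-Lipschitz property of the radial retraction, though intuitively clear, requires either an appeal to the general convex projection theorem or the explicit case analysis sketched above. Everything else is routine, and combining (i) and (ii) via $|\underline{f}(x) - \underline{f}(y)| = |f(g(x)) - f(g(y))| \leq L\|g(x)-g(y)\| \leq L\|x-y\|$ completes the argument.
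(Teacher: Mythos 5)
Your proof is correct, and at the top level it follows the same strategy as the paper's: obtain a Lipschitz constant $\Lambda$ for $f$ on the compact ball $\{z : \|z\|\leq c\}$, then reduce everything to showing that the radial retraction $g$ onto that ball is nonexpansive, so that $|\underline{f}(y)-\underline{f}(x)| = |f(g(y))-f(g(x))| \leq \Lambda\|g(y)-g(x)\| \leq \Lambda\|y-x\|$. Where you genuinely diverge is in how the nonexpansiveness is established. The paper never isolates $g$ as a map; in the case $\|x\|,\|y\|\geq c$ it instead bounds $\|\tfrac{cy}{\|y\|}-\tfrac{cx}{\|x\|}\|$ by $\|\tfrac{\|x\|}{\|y\|}y - x\|$ and then proves $\langle y-p, x-p\rangle\leq 0$ for $p=\tfrac{\|x\|}{\|y\|}y$ via auxiliary points $y'$, $p'$ and an orthogonal decomposition, with separate treatment of linearly dependent $x,y$; in the mixed case $\|x\|<c\leq\|y\|$ it locates the point $x_0$ where the segment $[x,y]$ crosses the sphere of radius $c$ and reduces to the previous case. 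Your algebraic route is considerably shorter: the identity $\|\tfrac{cx}{\|x\|}-\tfrac{cy}{\|y\|}\|^2 = 2c^2\bigl(1-\tfrac{\langle x,y\rangle}{\|x\|\,\|y\|}\bigr)$ combined with $\|x-y\|^2\geq 2(\|x\|\|y\|-\langle x,y\rangle)$ and $\|x\|\|y\|\geq c^2$ settles the outside--outside case in two lines, and your mixed case factors cleanly as $(\|y\|-c)\bigl(\|y\|+c-\tfrac{2\langle x,y\rangle}{\|y\|}\bigr)\geq 0$ via Cauchy--Schwarz, avoiding the boundary-crossing construction entirely. You also supply the Lebesgue-number/segment-subdivision argument for why local Lipschitz continuity yields a uniform constant on the compact convex ball, a step the paper simply asserts. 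Both proofs are sound; yours buys brevity and an explicit, reusable lemma (the metric projection onto a closed ball is $1$-Lipschitz), while the paper's buys a coordinate-free geometric picture at the cost of more case analysis.
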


\begin{proof}
As the function $f$ is locally Lipschitz on $\mathbb{R}^n$ and the set $\{ z \in \mathbb{R}^n \mid \|z\| \leq c \}$ is compact, there exists a $\Lambda > 0$ such that
\[ | f(y) - f(x) | \leq \Lambda \| y - x \| \]
for all $x, y \in \{ z \in \mathbb{R}^n \mid \|z\| \leq c \}$. In the following, we show that
\[ | \underline{f}(y) - \underline{f}(x) | \leq \Lambda \| y - x \| \]
for all $x, y \in \mathbb{R}^n$. Let us fix for this two arbitrary $x,y \in \mathbb{R}^n$.

We note that the cases $(x = 0 \land y = 0)$, $(x = 0 \land y \neq 0)$ and $(x \neq 0 \land y = 0)$ are obvious. For the case $x \neq 0 \land y \neq 0$, we may assume without loss of generality that $\|x\| \leq \|y\|$. Then, the case can be divided into the subcases $(\|x\| < c \land \|y\| < c),$ $(\|x\| \geq c \land \|y\| \geq c)$ and $(0 < \|x\| < c \leq \|y\|)$. The first subcase is clear, as $\underline{f} \equiv f$ on $\{ x \in \mathbb{R}^n \mid \| x \| \leq c \}$.

Let us consider now the case $(\|x\| \geq c \land \|y\| \geq c)$. Specifically, we assume first that $y$ and $x$ are linearly dependent, i.e. $y = \lambda x$ for some $\lambda \in \mathbb{R}$. As $\underline{f}(y) = \underline{f}(x)$ in the case of $\lambda \geq 0$, we may assume that $\lambda < 0$. Then, as
\begin{equation*}
| \underline{f}(y) - \underline{f}(x) | = \left| f\left( \frac{c \lambda x}{|\lambda| \|x\|} \right) - f\left(\frac{cx}{\|x\|}\right) \right|
\leq \Lambda \left\| - \frac{cx}{\|x\|} - \frac{cx}{\|x\|} \right\| = \Lambda \frac{c}{\|x\|} 2 \|x\| = \Lambda 2 c,
\end{equation*}
the estimation $\|y - x\| = (1 + |\lambda|) \| x \| = \|x\| + \|y\| \geq 2c$ yields the claim.

Let us assume now that $x$ and $y$ are linearly independent. Then, the calculation
\begin{equation*}
| \underline{f}(y) - \underline{f}(x) | = \left| f\left(\frac{cy}{\|y\|}\right) - f\left(\frac{cx}{\|x\|}\right) \right| \leq \Lambda \left\| \frac{cy}{\|y\|} - \frac{cx}{\|x\|} \right\|
= \Lambda \frac{c}{\|x\|} \left\| \frac{\|x\|}{\|y\|} y - x \right\| \leq \Lambda \left\| \frac{\|x\|}{\|y\|} y - x \right\|
\end{equation*}
shows that it is sufficient to prove that $\| p - x \| \leq \| y - x \|$, where $p := \frac{\|x\|}{\|y\|} y$. An informal pictorial description of the situation in the two-dimensional space is given in Figure \ref{prop_loc_glob_lip_fig}. The calculation

\begin{figure}
\begin{center}
    \includegraphics[scale=0.9]{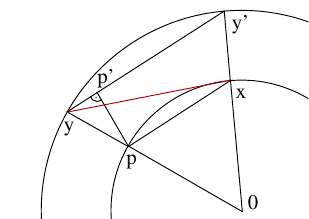}\hfill
    \caption{Graphical illustration in the two-dimensional case for the proof idea of Proposition~\ref{prop_loc_glob_lip} to show the  Lipschitz continuity of $\underline{f}$ in the case of linearly independent vectors $x$ and $y$.}\label{prop_loc_glob_lip_fig}
\end{center}
\end{figure}

\begin{multline*}
\| y - x \|^2 = \| y - p + p - x \|^2 = \langle y - p + p - x,  y - p + p - x \rangle =
\\ = \langle y - p, y - p \rangle + 2 \langle y - p, p - x \rangle + \langle p - x, p - x\rangle = \|y - p\|^2 + \| x - p \|^2 - 2 \langle y - p, x - p \rangle
\end{multline*}
shows then that $\| x - p \| \leq \| y - x \| $ if $\langle y - p, x - p \rangle \leq 0$. (Note that the above calculation establishes nothing more than the general version of the cosine law.) In order to prove this, we introduce the auxiliary points $y' := \frac{\|y\|}{\|x\|} x$ and
\[ p' := y + \left\langle \frac{y' - y}{\|y' - y\|}, p - y \right\rangle \frac{y' - y}{\|y' - y\|} . \]
Note that the linear independence of $x$ and $y$ implies $y' - y \neq 0$. Hence, $p'$ is well-defined and we conclude
\begin{equation*}
\langle y - p, x - p \rangle = \langle y - p' + p' - p, x - p \rangle = \langle y - p', x - p \rangle + \langle p' - p, x - p \rangle.
\end{equation*}
Consequently, it is sufficient to show that $\langle y - p', x - p \rangle \leq 0$ and $\langle p' - p, x - p \rangle = 0$. The latter follows from
\begin{equation*}
\begin{split}
\frac{\|y\|}{\|x\|} \left\langle p' - p, x - p \right\rangle & = \left\langle p' - p, \frac{\|y\|}{\|x\|}(x - p) \right\rangle = \langle p' - p, y' - y \rangle
= \left\langle y + \left\langle \frac{y' - y}{\|y' - y\|}, p - y \right\rangle \frac{y' - y}{\|y' - y\|} - p, y' - y \right\rangle \\
& = \langle y - p, y' - y\rangle + \left\langle \frac{y' - y}{\|y' - y\|} , p - y \right\rangle \left\langle \frac{y' - y}{\|y' - y\|} , y' - y \right\rangle \\
& = \langle y - p, y' - y\rangle + \langle y' - y , p - y \rangle \frac{\langle y' - y , y' - y \rangle}{\|y' - y\|^2} \\
& = \langle y - p, y' - y\rangle - \langle y - p , y' - y \rangle = 0 ,
\end{split}
\end{equation*}
whereas the calculation
\begin{equation*}
\begin{split}
\langle y - p', x - p \rangle & = \left \langle - \left\langle \frac{y' - y}{\|y' - y\|}, p - y \right\rangle \frac{y' - y}{\|y' - y\|} , x - p \right\rangle
= \left\langle \frac{y' - y}{\|y' - y\|} , y - p \right\rangle \left\langle \frac{y' - y}{\|y' - y\|}, x - p \right\rangle \\
& = \frac{1}{\|y' - y\|^2} \langle y' - y , y - p \rangle \langle y' - y , x - p \rangle
\end{split}
\end{equation*}
shows that $\langle y - p', x - p \rangle \leq 0$ in the case of $\langle y' - y , y - p \rangle \leq 0$ and $\langle y' - y , x - p \rangle \geq 0$. This can be shown using the Cauchy-Schwarz inequality, as
\begin{equation*}
\begin{split}
\left\langle y' - y , y - p \right\rangle & = \left\langle \frac{\|y\|}{\|x\|} x - y , y - \frac{\|x\|}{\|y\|} y \right\rangle
= \frac{\|y\|}{\|x\|} \langle x,y \rangle - \langle x,y \rangle - \| y \|^2 + \frac{\|x\|}{\|y\|} \| y \|^2
= \left(\frac{\|y\|}{\|x\|} - 1\right) \langle x,y \rangle - \| y \|^2 + \|x\| \|y\| \\
& \leq \left(\frac{\|y\|}{\|x\|} - 1\right) | \langle x,y \rangle | - \| y \|^2 + \|x\| \|y\|
\leq \left(\frac{\|y\|}{\|x\|} - 1\right) \|x\| \|y\| - \| y \|^2 + \|x\| \|y\| = 0
\end{split}
\end{equation*}
and
\begin{equation*}
\begin{split}
\langle y' - y , x - p \rangle & = \left\langle \frac{\|y\|}{\|x\|}x - y , x - \frac{\|x\|}{\|y\|} y \right\rangle
= \frac{\|y\|}{\|x\|} \|x\|^2 - \langle x,y \rangle - \langle y,x \rangle + \frac{\|x\|}{\|y\|} \|y\|^2 \\
& = \|y\| \|x\| - 2 \langle x,y \rangle + \|x\| \|y\| = 2 ( \|x\| \|y\| - \langle x,y \rangle ) \geq 0 .
\end{split}
\end{equation*}

We consider next the case $(0 < \|x\| < c \leq \|y\|)$. Let us assume first that $y$ and $x$ are linearly dependent, i.e. $y = \lambda x$ for some $\lambda \in \mathbb{R}$. Then it holds that
\begin{equation*}
| \underline{f}(y) - \underline{f}(x) | = \left| f\left(\frac{c \lambda x}{| \lambda | \| x \|}\right) - f(x) \right| \leq \Lambda \left\| x \left( \frac{\lambda c}{| \lambda | \| x \|} - 1 \right) \right\| \\
= \Lambda \left| \frac{\lambda c}{| \lambda | \| x \|}  - 1 \right| \| x \| =
\begin{cases}
    \Lambda(c - \| x \|) & , \ \lambda \geq 0 \\
    \Lambda(c + \| x \|) & , \ \lambda < 0
\end{cases}
\end{equation*}
Thus, if $\lambda \geq 0$, $\| x \| < \| y \|$ implies $\lambda > 1$ and we conclude that
\[ | \underline{f}(y) - \underline{f}(x) | \leq \Lambda(c - \| x \|) \leq \Lambda \| y - x \| , \]
as $\|y - x\| = (\lambda - 1) \|x\| = \|y\| - \|x\| \geq c - \| x \| $. Instead, if $\lambda < 0$, it holds that $\|y - x\| = |\lambda - 1| \|x\| = (|\lambda| + 1) \|x\| = \|y\| + \|x\| \geq c + \| x \| $, which implies in turn
\[ | \underline{f}(y) - \underline{f}(x) | \leq \Lambda(c + \| x \|) \leq \Lambda \| y - x \| . \]
Let us assume now that $x$ and $y$ are linearly independent. Then it holds that
\[ | \underline{f}(y) - \underline{f}(x) | = \left| f\left(\frac{c y}{\|y\|}\right) - f(x) \right| \leq \Lambda \left\| \frac{c y}{\|y\|} - x \right\| . \]
As $g : [0;1] \rightarrow \mathbb{R}_{\geq0}, t \mapsto \| (1-t)x + t y \|$ is continuous, the set $\mathcal{O} := \{ t \in [0;1] \mid g(t) < c \} = g^{-1}\big([0;c)\big)$ is open in $[0;1]$ and $\lambda := \sup \mathcal{O} \not \in \mathcal{O}$. Obviously, $\| g(t) \| \geq c$ for all $\lambda \leq t \leq 1$ and the continuity of $g$ yields $g(\lambda) = c$. (Moreover, the convexity of $\{ z \in \mathbb{R}^n \mid \| z \| < c \}$ asserts that $\| g(t) \| < c$ for all $0 \leq t < \lambda$.) Together with $x_0 := (1 - \lambda) x + \lambda y$, it holds then that
\begin{equation*}
\left\| \frac{c y}{\|y\|} - x \right\| \leq \left\| \frac{c y}{\|y\|} - x_0 \right\| + \left\| x_0 - x \right\| = \left\| \frac{\|x_0\|}{\|y\|} y - x_0 \right\| + \left\| x_0 - x \right\|
\end{equation*}
Note that the last subcase asserts that $\| \frac{\|x_0\|}{\|y\|} y - x_0 \| \leq \| y - x_0 \|$ if $y$ and $x_0$ are linearly independent. Let us assume towards a contradiction that $y$ and $x_0$ are linearly dependent. Then there exists a $\mu \neq 0$ such that $y = \mu x_0 = \mu (1 - \lambda) x + \mu \lambda y$, which implies in turn $(1 - \mu \lambda) y = \mu (1 - \lambda) x$. As $\lambda < 1$, this induces $x = \frac{1 - \mu \lambda}{\mu (1 - \lambda)} y$ and $x \neq 0$ yields the contradiction $y = \frac{\mu (1 - \lambda)}{1 - \mu \lambda} x$. Hence, it holds that $\| \frac{\|x_0\|}{\|y\|} y - x_0 \| \leq \| y - x_0 \|$ and it is sufficient to show that $\| y - x_0 \| + \| x_0 - x \| = \| y - x \|$. For this, we note that
\begin{equation*}
x_0 - x = - \lambda x + \lambda y = \lambda (y - x) \qquad \text{and} \qquad y - x_0 = y - (1 - \lambda) x - \lambda y = (1 - \lambda) (y - x)
\end{equation*}
imply together with
\begin{equation*}
\begin{split}
\|y - x \|^2 & = \langle y - x_0 + x_0 - x, y - x_0 + x_0 - x \rangle = \|y - x_0\|^2 + \|x_0 - x\|^2 + 2 \langle y - x_0, x_0 - x \rangle \\
& = \|y - x_0\|^2 + \|x_0 - x\|^2 + 2 \lambda (1 - \lambda) \langle y - x, y -  x \rangle = \|y - x_0\|^2 + \|x_0 - x\|^2 + 2 \| y - x_0 \| \| x_0 - x \| \\
& = (\|y - x_0\| + \|x_0 - x\|)^2
\end{split}
\end{equation*}
the desired relation.
\end{proof}


\begin{proof}[Proof of Theorem \ref{thm_scheme}]
The proof borrows ideas from convergence proofs of the Euler method \cite{Gear1971} and the heat equation \cite{thomas95}, respectively. Let us define
\begin{align*}
c & := \max_{(x,y,t) \in [0;1]^2 \times [0;T]} \ \ \max_{1 \leq l \leq L} \ \ \max_{1 \leq p \leq P} \ \ \max \Big\{ \ |\alpha_l(x,y,t)|, |\beta_p(x,y)| \ \Big\}, &
c' & := \sqrt{(L + P) (c + 1)^2},
\end{align*}
and $\underline{f_j} : \mathbb{R}^{L+P} \rightarrow \mathbb{R}$, where $1 \leq j \leq J$, by
\begin{equation*}
\begin{split}
& \underline{f_j}(\vec{z}) :=
\begin{cases}
f_j(\vec{z}) & \ , \| \vec{z} \| \leq c' \\
f_j\big(c' \frac{\vec{z}}{\| \vec{z} \|}\big) & \ , \text{otherwise.}
\end{cases}
\end{split}
\end{equation*}
Since $\| (z_1, \ldots, z_{L+P}) \|_\infty \leq c + 1$ implies that $\| (z_1, \ldots, z_{L+P}) \| \leq c'$, we infer that $\underline{f_j}(z_1, \ldots, z_{L + P}) = f_j(z_1, \ldots, z_{L + P})$ for all $\| (z_1, \ldots, z_{L+P}) \|_\infty \leq c + 1$. This shows that the solution of (\ref{eq_pde}) on $[0;1]^2 \times [0;T]$ is also a solution of the PDE system
\begin{equation}\label{eq_pde_proj}
\partial_t \alpha_l = \mu_l \bigtriangleup \alpha_l  + \sum_{1 \leq j \leq J} (d_{jl} - c_{jl}) \underline{f_j}(\alpha_1, \ldots, \alpha_L, \beta_1, \ldots, \beta_P),
\end{equation}
where $1 \leq l \leq L$, subjected to the DBCs.

Consequently, as it can be shown that (\ref{eq_pde_proj}) has a unique solution on $[0;1]^2 \times [0;T]$, the solution of (\ref{eq_pde}) on $[0;1]^2 \times [0;T]$ is the solution of (\ref{eq_pde_proj}) on $[0;1]^2 \times [0;T]$. To see this, let us assume towards a contradiction that $(\hat{\alpha}_1, \ldots, \hat{\alpha}_L)$ is a solution of (\ref{eq_pde_proj}) satisfying $\xi(t) > 0$ for some $t$, where $\xi : [0;T] \rightarrow \mathbb{R}_{\geq0}$ is given by $\xi(t) := \max_{(x,y) \in [0;1]^2} \max_{1 \leq l \leq L} | \alpha_l(x,y,t) - \hat{\alpha}_l(x,y,t) |$. As the function $\zeta : (x,y,t) \mapsto \max_{1 \leq l \leq L} | \alpha_l(x,y,t) - \hat{\alpha}_l(x,y,t) |$ is continuous on the compactum $[0;1]^2 \times [0;T]$, it is uniformly continuous on $[0;1]^2 \times [0;T]$ and thus it satisfies
\begin{equation}\label{eq_thm_scheme_unif_cont}
\forall \varepsilon > 0 \exists \delta > 0 \forall (x,y) \in [0;1]^2 \forall t, t' \in [0;T] \Big( |t - t'| < \delta \Rightarrow |\zeta(x,y,t) - \zeta(x,y,t')| < \varepsilon \Big).
\end{equation}
Since $\forall (x,y) \in [0;1]^2 \Big(|\zeta(x,y,t)| - \varepsilon < |\zeta(x,y,t')| < |\zeta(x,y,t)| + \varepsilon\Big)$ implies $\xi(t) - \varepsilon < \xi(t') < \xi(t) + \varepsilon$, (\ref{eq_thm_scheme_unif_cont}) yields the (uniform) continuity of $\xi$ on $[0;T]$. Consequently, the set $\mathcal{O} := \{ t \in [0;T] \mid \xi(t) > 0 \} = \xi^{-1}\big((0;\infty)\big)$ is open in $[0;T]$ and $t_0 := \inf \mathcal{O}$, which exists because $\mathcal{O} \neq \emptyset$, is not contained in $\mathcal{O}$. Also, since the solutions are continuous in $t$, it holds that $t_0 < T$. Since $\xi$ is continuous, there exists a $\delta > 0$ such that $| \xi(t') | = | \xi(t') - \xi(t_0) | < 1$ for all $t' \in [t_0; t_0 + \delta] \subseteq [0;T]$ and we get
\[ \forall 1 \leq l \leq L \forall t_0 \leq t' \leq t_0 + \delta \forall (x,y) \in [0;1]^2 \Big( | \hat{\alpha}_l(x,y,t') | \leq | \hat{\alpha}_l(x,y,t') - \alpha_l(x,y,t') | + | \alpha_l(x,y,t') | < 1 + c \Big) . \]
Consequently, $(\hat{\alpha}_1, \ldots, \hat{\alpha}_L)$ is a solution of (\ref{eq_pde}) on $[0;1]^2 \times [0; t_0 + \delta]$. The choice of $t_0$ ensures that $\xi(t') > 0$ for some $t_0 < t' \leq t_0 + \delta$. This is a contradiction to the uniqueness of the solution of (\ref{eq_pde}) on $[0;1]^2 \times [0; t_0 + \delta]$ and we infer that the solution of (\ref{eq_pde}) on $[0;1]^2 \times [0;T]$ can be recovered by solving (\ref{eq_pde_proj}).

We show next that the finite difference scheme of (\ref{eq_pde_proj})
\begin{align}\label{eq_thm_scheme_ode_euler_proj}
\underline{\mathcal{L}}^{(x,y)}_l(\underline{\vec{a}}) & := \sum_{1 \leq j \leq J} (d_{jl} - c_{jl}) \underline{f_j}\big(\underline{a}^{(x,y)}_1, \ldots, \underline{a}^{(x,y)}_L, \beta_1(x,y), \ldots, \beta_P(x,y)\big) , \quad \underline{\vec{a}} \in \mathbb{R}^{\mathcal{R} \times \{1, \ldots, L\}} \nonumber \\
\underline{a}_l^{(x,y)}(m+1) & := \underline{a}_l^{(x,y)}(m) + \mathds{1}(x,y) \cdot \dt \cdot \Big( \mu_l \lapl \underline{a}_l^{(x,y)}(m) + \underline{\mathcal{L}}^{(x,y)}_l\big(\underline{\vec{a}}(m)\big) \Big), \qquad m \geq 0 ,
\end{align}
is convergent in the supremum norm $\| \underline{a}(m) \|_\infty := \max_{(x,y) \in \mathcal{R}, 1 \leq l \leq L} | \underline{a}_l^{(x,y)}(m)|$ on $\mathbb{R}^{\mathcal{R} \times \{1, \ldots, L\}}$. For this, we observe first that each $\underline{f_j}$ is Lipschitz with respect to $\| \cdot \|$ by Proposition \ref{prop_loc_glob_lip}. This and the equivalence of all norms on a finite dimensional vector space show that each $\underline{f_j}$ is also Lipschitz with respect to $\| \cdot \|_\infty$. Let us denote the space step error and the time step error of the scheme at $(x,y,t) \in \mathcal{R} \times [0; T - \dt]$ by
\begin{equation*}
\begin{split}
\rho_l(x,y,t) & := \mu_l \bigtriangleup \alpha_l(x,y,t) - \mu_l \lapl \alpha_l(x,y,t) \\
\eta_l(x,y,t) & := \alpha_l(x,y,t + \dt) - \alpha_l(x,y,t) - \dt\big( \mu_l \bigtriangleup \alpha_l(x,y,t) + g_l(x,y,t) \big) ,
\end{split}
\end{equation*}
respectively, where $g_l : [0;1]^2 \times [0;T] \rightarrow \mathbb{R}$ is defined as $\dsl g_l := \sum_{1 \leq j \leq J} (d_{jl} - c_{jl}) \underline{f_j}(\alpha_1, \ldots, \alpha_L, \beta_1, \ldots, \beta_P) .$

We begin with bounding the space step error. For this, we fix a region $(x,y) \in \mathcal{R} \setminus \Omega$ and note that there exist $x' \in (x-\ds;x+\ds)$ and $y' \in (y-\ds;y+\ds)$ such that
\begin{equation*}
\begin{split}
\frac{\rho_l(x,y,t)}{\mu_l} & = \partial_{xx} \alpha_l (x,y,t) + \partial_{yy} \alpha_l (x,y,t) \\
& \qquad - \bigg( \frac{\alpha_l(x+\ds,y, t) - 2 \alpha_l(x,y,t) + \alpha_l(x-\ds,y,t)}{\ds^2} + \frac{\alpha_l(x,y + \ds, t) - 2 \alpha_l(x,y,t) + \alpha_l(x,y-\ds,t)}{\ds^2} \bigg) \\
& = \frac{\partial_{xxxx} \alpha_l(x',y,t)}{12} \ds^2 + \frac{\partial_{yyyy} \alpha_l(x,y',t)}{12} \ds^2.
\end{split}
\end{equation*}
Consequently, there exists a $C_1 \in \mathbb{R}_{\geq0}$ which \emph{does not depend} on $\dt, \ds$ and satisfies $|\rho_l(x,y,t)| \leq C_1 \ds^2$ for all $(x,y) \in \mathcal{R} \setminus \Omega$, $0 \leq t \leq T$ and $1 \leq l \leq L$.

We turn now to the time step error. Applying the mean value theorem to $t \mapsto \alpha_l(x,y,t)$ allows us to infer the existence of a $t \leq t' \leq t + \dt$ such that $\alpha_l(x,y,t + \dt) - \alpha_l(x,y,t) = \partial_t \alpha_l(x,y,t') \dt$. Hence
\begin{equation*}
\begin{split}
\eta_l(x,y,t) & = \alpha_l(x,y,t + \dt) - \alpha_l(x,y,t) - \dt\big( \mu_l \bigtriangleup \alpha_l(x,y,t) + g_l(x,y,t) \big)
= \dt \big( \partial_t \alpha_l(x,y,t') - \mu_l \bigtriangleup \alpha_l(x,y,t) - g_l(x,y,t) \big) \\
& = \dt \Big( \mu_l \bigtriangleup \alpha_l(x,y,t') + g_l(x,y,t') - \mu_l \bigtriangleup \alpha_l(x,y,t) - g_l(x,y,t) \Big) .
\end{split}
\end{equation*}
Let $\Lambda' > 0$ refer to the Lipschitz constant of $\mu_1 \bigtriangleup \alpha_1, \ldots, \mu_L \bigtriangleup \alpha_L$ on $[0;1]^2 \times [0;T]$ and $\Lambda'' > 0$ be the \emph{global} Lipschitz constant of $\underline{f_1}, \ldots, \underline{f_J}$. Then, together with $\dsl \Lambda := \max_{1 \leq l \leq L} \sum_{1 \leq j \leq J} |d_{jl} - c_{jl}| \Lambda''$, the above calculation yields
\begin{multline*}
\frac{| \eta_l(x,y,t) |}{\dt} \leq \sum_{1 \leq j \leq J} |d_{jl} - c_{jl}| \Lambda'' \| \vec{\alpha}(x,y,t') - \vec{\alpha}(x,y,t) \|_\infty + \Lambda' \| (x,y,t') - (x,y,t) \|_\infty \\
\leq \Lambda \| \vec{\alpha}(x,y,t') - \vec{\alpha}(x,y,t) \|_\infty + \Lambda' \dt
\leq \Lambda \dt \max_{1 \leq l' \leq L} \max_{(x,y,t) \in [0;1]^2 \times [0;T]} | \partial_t \alpha_{l'}(x,y,t) | + \Lambda' \dt ,
\end{multline*}
where the last inequality follows again by means of the mean value theorem. Hence, there exists a $C_2 \in \mathbb{R}_{\geq0}$ which \emph{does not depend} on $\dt, \ds$ and satisfies $| \eta_l(x,y,t) | \leq C_2 \dt^2$ for all $(x,y) \in \mathcal{R}$, $0 \leq t \leq T - \dt$ and $1 \leq l \leq L$.

After we were able to derive bounds for $|\eta_l|$ and $|\rho_l|$, we note that
\begin{equation*}
\eta_l(x,y,t) = \alpha_l(x,y,t + \dt) - \alpha_l(x,y,t) - \dt\big( \mu_l \lapl \alpha_l(x,y,t) + \rho_l(x,y,t) + g_l(x,y,t) \big)
\end{equation*}
is equivalent to
\begin{equation*}
\alpha_l(x,y,t + \dt) = \alpha_l(x,y,t) + \dt \Big( \mu_l \lapl \alpha_l(x,y,t) + g_l(x,y,t)\Big) + \dt \rho_l(x,y,t) + \eta_l(x,y,t) .
\end{equation*}
Consequently, together with $t_m := m \dt$, $e_l^{(x,y)}(m) := \alpha_l(x,y,t_m) - \underline{a}_l^{(x,y)}(m)$ and
\begin{equation*}
h_l(x,y,m) := \sum_{1 \leq j \leq J} (d_{jl} - c_{jl}) \underline{f_j}\big(\underline{a}^{(x,y)}_1(m), \ldots, \underline{a}^{(x,y)}_L(m), \beta_1(x,y), \ldots, \beta_P(x,y)\big),
\end{equation*}
it holds that
\begin{multline*}
|e_l^{(x,y)}(m+1)| \leq \Big|(1 - |\mathcal{N}(x,y)| r_l) e_l^{(x,y)}(m) + r_l \sum_{(\tilde{x},\tilde{y}) \in \mathcal{N}(x,y)} e_l^{(\tilde{x},\tilde{y})}(m)
+ \dt \Big(g_l(x,y,t_m) - h_l(x,y,m) \Big)\Big| + \\
+ \dt |\rho_l(x,y,t_m)| + |\eta_l(x,y,t_m)| \leq \| \vec{e}(m) \|_\infty + \dt \Lambda \| \vec{e}(m) \|_\infty + \dt C_1 \ds^2 + C_2 \dt^2
\end{multline*}
for all $0 \leq m \leq M - 1$ and $(x,y) \in \mathcal{R} \setminus \Omega$, thanks to $r_l \leq 1/4$. Noting that $e_l^{(x,y)}(m + 1) = 0$ for all $(x,y) \in \Omega$ because of the DBCs, we conclude that the estimation
\[ |e_l^{(x,y)}(m+1)| \leq (1 + \dt \Lambda) \| \vec{e}(m) \|_\infty + \dt C_1 \ds^2 + C_2 \dt^2 \]
holds for all $(x,y) \in \mathcal{R}$. Together with $C_3 := \max\{C_1,C_2\}$, this yields
\begin{equation*}
\begin{split}
\!\! \|\vec{e}(m)\|_\infty & \leq (1 + \dt \Lambda)^m \| \vec{e}(0) \|_\infty + C_3(\dt\ds^2 + \dt^2) \sum_{k = 0}^{m-1} (1 + \dt \Lambda)^k
\leq (1 + \dt \Lambda)^m \| \vec{e}(0) \|_\infty +  C_3(\dt\ds^2 + \dt^2) \frac{(1 + \dt \Lambda)^m - 1}{\dt \Lambda} \\
& \leq e^{m \dt \Lambda} \| \vec{e}(0) \|_\infty +  C_3(\ds^2 + \dt) \frac{e^{m \dt \Lambda} - 1}{\Lambda}
\leq e^{T \Lambda} \| \vec{e}(0) \|_\infty +  C_3(\ds^2 + \dt) \frac{e^{T \Lambda} - 1}{\Lambda}
= 0 + \frac{e^{T \Lambda} - 1}{\Lambda/C_3} (\ds^2 + \dt)
\end{split}
\end{equation*}
for all $0 \leq m \leq M$. This shows that (\ref{eq_thm_scheme_ode_euler_proj}) converges in the supremum norm $\| \cdot \|_\infty$ to the solution of (\ref{eq_pde_proj}) and (\ref{eq_pde}) on $[0;1]^2 \times [0;T]$. Moreover, it holds that
\begin{equation*}
| \underline{a}^{(x,y)}_l(m) | \leq | \underline{a}^{(x,y)}_l(m) - \alpha_l(x,y,t_m) | + | \alpha_l(x,y,t_m) | \leq \| \vec{e}(m) \|_\infty + c \leq 1/2 + c
\end{equation*}
for all $0 \leq m \leq M$, $1 \leq l \leq L$ and $(x,y) \in \mathcal{R}$ if $\ds$ and $\dt$ are sufficiently small. Specifically, there exist $K_0$ and $M_0$ such that $\underline{\mathcal{L}}^{(x,y)}_l(\underline{\vec{a}}(m)) = \mathcal{L}^{(x,y)}_l(\underline{\vec{a}}(m))$ for all $0 \leq m \leq M$, $(x,y) \in \mathcal{R}$ and $1 \leq l \leq L$, if $K \geq K_0$ and $M \geq M_0$. Consequently, (\ref{eq_thm_scheme_ode_euler_proj}) coincides with the scheme (\ref{eq_ode_euler}) in the case of $K \geq K_0$ and $M \geq M_0$ and the proof is complete.
\end{proof}


\begin{proof}[Proof of Theorem \ref{thm_main}]
In the following, we will use the auxiliary results from the proof of Theorem \ref{thm_scheme}. Recall that the sequence induced by the finite difference scheme (\ref{eq_thm_scheme_ode_euler_proj}) remains bounded by $c + 1/2$ in the supremum norm if $M \geq M_0$ and $K \geq K_0$ and that (\ref{eq_thm_scheme_ode_euler_proj}) can be interpreted as the Euler method of the ODE system
\begin{equation}\label{eq_thm_main_ode_proj}
\frac{d}{dt} \underline{a}^{(x,y)}_l(t) = \mathds{1}(x,y) \cdot \Big(\mu_l \lapl \underline{a}_l^{(x,y)} + \underline{\mathcal{L}}^{(x,y)}_l(\underline{\vec{a}}(t)) \Big) ,
\end{equation}
where $(x,y) \in \mathcal{R}$ and $1 \leq l \leq L$. The global Lipschitz continuity of (\ref{eq_thm_main_ode_proj}) and the global version of Picard-Lindel\"{o}f's theorem assert that (\ref{eq_thm_main_ode_proj}) has a unique solution $\underline{\vec{a}}$ which is defined on the whole real line. Define $C'_1 := \frac{e^{T \Lambda} - 1}{\Lambda/C_3}$ and fix a $K$ such that $K \geq K_0$ and $C'_1 \ds^2 < \varepsilon / 12$. The uniform continuity of $\underline{\vec{a}}$ on $[0;T]$ implies the existence of a $\delta_1 > 0$ such that
\[ \forall 1 \leq l \leq L \forall (x,y) \in \mathcal{R} \forall t,t' \in [0;T] \Big( |t - t'| < \delta_1 \Rightarrow | \underline{a}_l^{(x,y)}(t) - \underline{a}_l^{(x,y)}(t') | < \varepsilon / 12 \Big). \]
Similarly, thanks to the uniform continuity of $(\alpha_1, \ldots, \alpha_L)$ on $[0;1]^2 \times [0;T]$, there exists a $\delta_2 > 0$ such that
\[ \forall 1 \leq l \leq L \forall (x,y) \in [0;1]^2 \forall t,t' \in [0;T] \Big( |t - t'| < \delta_2 \Rightarrow | \alpha_l(x,y,t) - \alpha_l(x,y,t') | < \varepsilon / 12 \Big) . \]
Moreover, by~\cite{Gear1971} there exists a $C'_2 > 0$ such that
\[ \max_{0 \leq m \leq M} \max_{(x,y) \in \mathcal{R}, 1 \leq l \leq L} | \underline{a}_l^{(x,y)}(t_m) - \underline{a}_l^{(x,y)}(m) | \leq C'_2 \dt \]
for all $M \geq M_0$. Note that $\underline{a}_l^{(x,y)}(t_m)$ denotes the exact ODE solution at time $t_m$, whereas  $\underline{a}_l^{(x,y)}(m)$ denotes the $m$-th sequence element in the method of Euler (\ref{eq_thm_scheme_ode_euler_proj}). In the following, let us fix an $M$ such that $\dsl \dt = \frac{T}{M} < \min\left\{\frac{T}{M_0}, \delta_1, \delta_2, \frac{\varepsilon}{4(C'_1 + C'_2)}\right\}$ and assume without loss of generality that $\varepsilon < 1$. Then, for an arbitrary $t \in [0;T]$ and an $0 \leq m \leq M$ such that $|t - t_m| < \dt$, it holds that
\begin{equation*}
| \underline{a}_l^{(x,y)}(t) | \leq | \underline{a}_l^{(x,y)}(t) - \underline{a}_l^{(x,y)}(t_m) | + | \underline{a}_l^{(x,y)}(t_m) - \underline{a}_l^{(x,y)}(m) |  + | \underline{a}_l^{(x,y)}(m) |
\leq \varepsilon / 12 + C'_2 \dt + (c + 1/2) \leq \varepsilon / 12 + \varepsilon / 4 + c + 1/2
\end{equation*}
for all $(x,y) \in \mathcal{R}$ and $1 \leq l \leq L$, meaning that the solution of (\ref{eq_thm_main_ode_proj}) on $[0;T]$ is bounded by $c + 1$ in the supremum norm. Consequently, since $\| (z_1, \ldots, z_{L+P}) \|_\infty \leq c + 1$ implies $\underline{f_j}(z_1, \ldots, z_{L + P}) = f_j(z_1, \ldots, z_{L + P})$ for all $1 \leq j \leq J$, it holds that the solution $\vec{a}$ of the original ODE system (\ref{eq_ode_sys}) is well-defined on $[0;T]$ and satisfies $\vec{a}_{|[0;T]} = \underline{\vec{a}}_{|[0;T]}$. Hence, for an arbitrary $t \in [0;T]$ and an $0 \leq m \leq M$ such that $|t - t_m| < \dt$, it holds that
\begin{equation*}
\begin{split}
| a_l^{(x,y)}(t) - \alpha_l(x,y,t) | & \leq | \underline{a}_l^{(x,y)}(t) - \underline{a}_l^{(x,y)}(t_m) | + | \underline{a}_l^{(x,y)}(t_m) - \underline{a}_l^{(x,y)}(m) |
+ | \underline{a}_l^{(x,y)}(m) - \alpha_l(x,y,t_m) | + | \alpha_l(x,y,t_m) - \alpha_l(x,y,t) | \\
& \leq \varepsilon/12 + C'_2\dt + C'_1(\ds^2 + \dt) + \varepsilon/12 \leq \varepsilon/4 + (C'_1 + C'_2) \dt \leq \varepsilon / 2
\end{split}
\end{equation*}
for all $(x,y) \in \mathcal{R}$ and $1 \leq l \leq L$. Together with
\begin{equation*}
\begin{split}
\phi_1 & := \sup_{0 \leq t \leq T} \max_{(x,y) \in \mathcal{R}, 1 \leq l \leq L} \left| \frac{1}{N} A_l^{(x,y)}(t) - \alpha_l(x,y,t) \right| ,  \\
\phi_2 & := \sup_{0 \leq t \leq T} \max_{(x,y) \in \mathcal{R}, 1 \leq l \leq L} \left| \frac{1}{N} A_l^{(x,y)}(t) - a_l^{(x,y)}(t) \right| , \\
\phi_3 & := \sup_{0 \leq t \leq T} \max_{(x,y) \in \mathcal{R}, 1 \leq l \leq L} \left| a_l^{(x,y)}(t) - \alpha_l(x,y,t) \right| ,
\end{split}
\end{equation*}
we infer that $\mathbb{P} \{ \phi_1 > \varepsilon \} \leq \mathbb{P} \{ \phi_2 + \phi_3 > \varepsilon \} \leq \mathbb{P} \{ \phi_2 > \varepsilon/2 \ \lor \ \phi_3 > \varepsilon/2 \}
\leq \mathbb{P}\{ \phi_2 > \varepsilon/2 \} + \mathbb{P}\{ \phi_3 > \varepsilon/2 \}$. Since the choice of $K$ from above implies that $\phi_3 \leq \varepsilon/2$, it holds that $\mathbb{P}\{ \phi_3 > \varepsilon/2 \} = 0$ and Theorem \ref{thm_kurtz} yields the claim.
\end{proof}


\begin{proof}[Proof of Theorem~\ref{thm_scheme_nbcs}]
In the following, we will use results and variables from the proof of Theorem~\ref{thm_scheme}. We first discuss the space step error and start by fixing an $(x,y) \in \Omega$ where $x = 0$ and $\ds \leq y \leq 1 - \ds$. Due to Taylor's theorem there exist $x' \in (0;\ds)$ and $y' \in (y -\ds;y + \ds)$ such that
\begin{equation*}
\begin{split}
\frac{\rho_l(0,y,t)}{\mu_l} & = \partial_{xx} \alpha_l (0,y,t) + \partial_{yy} \alpha_l (0,y,t) - \Big( \frac{\alpha_l(0 + \ds,y,t) - \alpha_l(0,y,t)}{\ds^2} + \frac{\alpha_l(0,y + \ds, t) - 2 \alpha_l(0,y,t) + \alpha_l(0,y-\ds,t)}{\ds^2} \Big) \\
& = \partial_{xx} \alpha_l (0,y,t) - \Big( \frac{\partial_x \alpha_l(0,y,t)}{\ds} + \frac{\partial_{xx} \alpha_l(0,y,t)}{2!} + \frac{\partial_{xxx} \alpha_l(x',y,t)}{3!} \ds - \frac{\partial_{yyyy} \alpha_l(0,y',t)}{12} \ds^2 \Big) \\
& = \frac{\partial_{xx} \alpha_l(0,y,t)}{2} + \ds \Big( \frac{\partial_{xxx} \alpha_l(x',y,t)}{3!} - \frac{\partial_{yyyy} \alpha_l(\ds,y',t)}{12} \ds \Big)
\end{split}
\end{equation*}
As a similar calculation applies in the other seven boundary cases, we infer that there exists a $C'_1 \in \mathbb{R}_{\geq0}$ which \emph{does not depend} on $\dt, \ds$ and satisfies $|\rho_l(x,y,t)| \leq C'_1$ for all $(x,y) \in \Omega$, $0 \leq t \leq T$ and $1 \leq l \leq L$. By choosing $C_1 \geq C_1'$, we may assume without loss of generality that $C_1 \geq C'_1$.

In order to show consistency with respect to $\| \cdot \|_{1,\ds}$, we have to prove that
\begin{equation}\label{eq_thm_scheme_nbcs_1}
\Big\| \frac{\vec{\eta}(x,y,t)}{\dt} + \vec{\rho}(x,y,t) \Big\|_{1,\ds} \rightarrow 0, \quad \dt, \ds \rightarrow 0
\end{equation}
For this, we define
\[ C_{(x,y)} :=
\begin{cases}
C'_1       & , \ (x,y) \in \Omega \\
C_1 \ds^2 & , \ (x,y) \in \mathcal{R} \setminus \Omega
\end{cases}
\]
conclude (by using results from the proof of Theorem~\ref{thm_scheme})
\begin{equation*}
\Big\| \frac{\vec{\eta}(x,y,t)}{\dt} + \vec{\rho}(x,y,t) \Big\|_{1,\ds} \leq \sum_{l = 1}^L \sum_{(x,y) \in \mathcal{R}} (\dt C_2 + C_{(x,y)}) \ds^2
\leq L \Big( 4 C_2 \dt + \sum_{(x,y) \in \mathcal{R} \setminus \Omega} (C_2 \ds^2) \ds^2 + \sum_{(x,y) \in \Omega} C_1 \ds^2 \Big)
\end{equation*}
Hence, as $|\Omega| = 2(K + 1) + 2(K - 1) = 4K$ and $| \mathcal{R} \setminus \Omega | = (K - 1)^2$ we infer that $|\Omega| \ds^2 = |\Omega| / K^2 = 4/K = 4 \ds$ and $| \mathcal{R} \setminus \Omega | \ds^2 \leq 1$. Consequently, it holds
\begin{equation*}
L \Big( 4 C_2 \dt + \sum_{(x,y) \in \mathcal{R} \setminus \Omega} (C_2 \ds^2) \ds^2 + \sum_{(x,y) \in \Omega} C_1 \ds^2 \Big) \leq L \Big( 4 C_2 \dt + C_2 \ds^2 + 4C_1 \ds \Big) ,
\end{equation*}
yielding the consistency (\ref{eq_thm_scheme_nbcs_1}).

Next, we show that the scheme is stable with respect to $\| \cdot \|_\infty$. (Note that this is sufficient, as $\| \cdot \|_{1,\ds} \leq 4 L \| \cdot \|_\infty$.) From the proof of Theorem~\ref{thm_scheme} we know that
\[ \dt |\rho_l(x,y,t_m)| + |\eta_l(x,y,t_m)| \leq C_1 \dt + C_2 \dt^2 \]
if $\ds$ is sufficiently small. By performing a similar calculation as in the case of DBCs, we infer for $C'_3 := \max\{C_1, C_2\}$
\begin{equation*}
|e_l^{(x,y)}(m)| \leq \| \vec{e}(m) \|_\infty + \dt \Lambda \| \vec{e}(m) \|_\infty + C'_3(\dt + \dt^2) \leq (1 + \dt \Lambda) \| \vec{e}(m) \|_\infty + C'_3(\dt + \dt^2),
\end{equation*}
which allows us to conclude, for all $0 \leq m \leq M$, that
\begin{align*}
\|\vec{e}(m)\|_\infty & \leq (1 + \dt \Lambda)^m \| \vec{e}(0) \|_\infty + C'_3(\dt + \dt^2) \sum_{k = 0}^{m-1} (1 + \dt \Lambda)^k
\leq (1 + \dt \Lambda)^m \| \vec{e}(0) \|_\infty +  C'_3(\dt + \dt^2) \frac{(1 + \dt \Lambda)^m - 1}{\dt \Lambda} \\
& \leq e^{m \dt \Lambda} \| \vec{e}(0) \|_\infty +  C'_3(1 + \dt) \frac{e^{m \dt \Lambda} - 1}{\Lambda}
\leq e^{T \Lambda} \| \vec{e}(0) \|_\infty +  C'_3(1 + \dt) \frac{e^{T \Lambda} - 1}{\Lambda} \\
& = 0 + \frac{e^{T \Lambda} - 1}{\Lambda/C'_3} (1 + \dt)
\end{align*}
Consequently, as $\vec{\alpha}$ is bounded on $[0;1]^2 \times [0;T]$,
\begin{equation*}
\begin{split}
| a_l^{(x,y)}(m) -  a_l^{(x,y)}(0) | & \leq | a_l^{(x,y)}(m) - a_l(x,y,t_m) | + | \alpha_l(x,y,t_m)  - a_l^{(x,y)}(0) |
\end{split}
\end{equation*}
and $\| \cdot \|_{1,\ds} \leq 4 L \| \cdot \|_\infty$, the scheme is stable with respect to $\| \cdot \|_{1,\ds}$.
\end{proof}


\begin{proof}[Proof of Corollary \ref{cor_scheme_nbcs}]
From the proof of Theorem~\ref{thm_scheme_nbcs} it becomes apparent that the scheme (\ref{eq_ode_euler_nbcs}) is accurate of order $(1,1)$~\cite[Definition 2.3.3]{thomas95} with respect to $\lVert \cdot \rVert_{\ds,1}$. Since Theorem~\ref{thm_scheme_nbcs} implies also that (\ref{eq_ode_euler_nbcs}) is stable with respect to $\lVert \cdot \rVert_{\ds,1}$, the proof of Lax's theorem~\cite[Theorem 2.5.2]{thomas95} yields the claim.
\end{proof}


\begin{proof}[Proof of Theorem \ref{thm_main_nbcs}]
Theorem~\ref{thm_scheme_nbcs} ensures that there exist $K_0, M_0 \geq 0$ and $C > 0$ such that
\[ \max_{0 \leq m \leq M} \frac{1}{K^2} \sum_{\substack{(x,y) \in \mathcal{R}_K \\ 1 \leq l \leq L}} | \alpha_l(x,y,m\dt) - a^{(x,y)}_l(m) | \leq C(\ds + \dt)  \]
for all $K \geq K_0$ and $M \geq M_0$. Moreover, the ODE system
\begin{equation}\label{eq_thm_main_nbcs_1}
\frac{d}{dt} a^{(x,y)}_l(t) = \mu_l \lapl a_l^{(x,y)} + \mathcal{L}^{(x,y)}_l(\vec{a}(t)),
\end{equation}
where $(x,y) \in \mathcal{R}$ and $1 \leq l \leq L$, is globally Lipschitz. Thus, the global version of Picard-Lindel\"{o}f's theorem assert that (\ref{eq_thm_main_nbcs_1}) has a unique solution $\vec{a}$ which is defined on the whole real line. The uniform continuity of $\vec{a}$ on $[0;T]$ implies the existence of a $\delta_1 > 0$ such that
\[ \forall 1 \leq l \leq L \forall (x,y) \in \mathcal{R} \forall t,t' \in [0;T] \Big( |t - t'| < \delta_1 \Rightarrow | a_l^{(x,y)}(t) - a_l^{(x,y)}(t') | < \varepsilon / 48 L \Big). \]
Similarly, thanks to the uniform continuity of $(\alpha_1, \ldots, \alpha_L)$ on $[0;1]^2 \times [0;T]$, there exists a $\delta_2 > 0$ such that
\[ \forall 1 \leq l \leq L \forall (x,y) \in [0;1]^2 \forall t,t' \in [0;T] \Big( |t - t'| < \delta_2 \Rightarrow | \alpha_l(x,y,t) - \alpha_l(x,y,t') | < \varepsilon / 48 L \Big) . \]
Moreover, by~\cite{Gear1971} there exists a $C' > 0$ such that
\[ \max_{0 \leq m \leq M} \max_{(x,y) \in \mathcal{R}, 1 \leq l \leq L} | a_l^{(x,y)}(t_m) - a_l^{(x,y)}(m) | \leq C' \dt \]
for all $M \geq M_0$. Note that $a_l^{(x,y)}(t_m)$ denotes the exact ODE solution at time $t_m$, whereas  $a_l^{(x,y)}(m)$ denotes the $m$-th sequence element in the method of Euler (\ref{eq_thm_main_nbcs_1}).
In the following, let us fix an $M \geq M_0$ such that $\dsl \dt = T / M < \min\left\{\delta_1, \delta_2, \varepsilon / 48 L C, \varepsilon / 48 L C' \right\}$ and $K \geq K_0$ such that $C \ds \leq \varepsilon / 48 L$. Then, for an arbitrary $t \in [0;T]$ and an $0 \leq m \leq M$ such that $|t - t_m| < \dt$, it holds that
\begin{equation*}
\begin{split}
| a_l^{(x,y)}(t) - \alpha_l(x,y,t) | & \leq | {a}_l^{(x,y)}(t) - {a}_l^{(x,y)}(t_m) | + | {a}_l^{(x,y)}(t_m) - {a}_l^{(x,y)}(m) |
+ | {a}_l^{(x,y)}(m) - \alpha_l(x,y,t_m) | + | \alpha_l(x,y,t_m) - \alpha_l(x,y,t) | \\
& \leq \varepsilon / 48 L + \varepsilon / 48 L  + \varepsilon / 24 L  + \varepsilon / 48 L \leq \varepsilon / 8 L
\end{split}
\end{equation*}
for all $(x,y) \in \mathcal{R}$ and $1 \leq l \leq L$. Together with
\begin{equation*}
\begin{split}
\phi_1 & := \sup_{0 \leq t \leq T} \frac{1}{K^2} \sum_{\substack{(x,y) \in \mathcal{R}_K \\ 1 \leq l \leq L}} \left| \frac{1}{N} A_l^{(x,y)}(t) - \alpha_l(x,y,t) \right| ,  \\
\phi_2 & := \sup_{0 \leq t \leq T} \frac{1}{K^2} \sum_{\substack{(x,y) \in \mathcal{R}_K \\ 1 \leq l \leq L}} \left| \frac{1}{N} A_l^{(x,y)}(t) - a_l^{(x,y)}(t) \right| , \\
\phi_3 & := \sup_{0 \leq t \leq T} \frac{1}{K^2} \sum_{\substack{(x,y) \in \mathcal{R}_K \\ 1 \leq l \leq L}} \left| a_l^{(x,y)}(t) - \alpha_l(x,y,t) \right| ,
\end{split}
\end{equation*}
we infer that $\mathbb{P} \{ \phi_1 > \varepsilon \} \leq \mathbb{P} \{ \phi_2 + \phi_3 > \varepsilon \} \leq \mathbb{P} \{ \phi_2 > \varepsilon/2 \ \lor \ \phi_3 > \varepsilon/2 \}
\leq \mathbb{P}\{ \phi_2 > \varepsilon/2 \} + \mathbb{P}\{ \phi_3 > \varepsilon/2 \}$. Since the choice of $M, K$ from above implies that $\phi_3 \leq (\varepsilon/8 L) 3 L \leq \varepsilon / 2$, it holds that $\mathbb{P}\{ \phi_3 > \varepsilon/2 \} = 0$. Consequently, Theorem \ref{thm_kurtz_nbcs} and $\lVert \cdot \rVert_{\ds, 1} \leq 4 L \lVert \cdot \rVert_\infty$ yield the claim.
\end{proof}

\end{document}